\newtheorem{proposition}{Proposition}
\newtheorem*{theorem*}{Theorem 1}
\newtheorem*{corollary*}{Corollary 1}
\newtheorem*{corollary**}{Corollary 2}
\newtheorem*{assumption*}{Assumption 1}
\newtheorem*{assumption**}{Assumption 2}
\title{ADESS: A Proof-of-Work Protocol to Deter Double-Spend Attacks}
\author{Daniel Aronoff\thanks{Massachusetts Institute of Technology} \and Isaac Ardis\thanks{Ethereum Classic Cooperative}}
\begin{document}

 




\maketitle

{
  \renewcommand{\thefootnote}{}
  \footnotetext{\vskip 0.5em
    \noindent Corresponding author Daniel Aronoff, daronoff@mit.edu. \\
    \noindent Copyright \copyright{Daniel Aronoff and Isaac Ardis}}
}

\pagenumbering{roman}

\begin{abstract}
A principal vulnerability of a proof-of-work ("PoW") blockchain is that an attacker can re-write the history of transactions by forking a previously published block and build a new chain segment containing a different sequence of transactions. If the attacker’s chain has the most cumulative mining puzzle difficulty, nodes will recognize it as canonical. We propose a modification to PoW protocols, called $ADESS$, that contains two novel features. The first modification enables a node to identify the attacker chain by comparing the temporal sequence of blocks on competing chains. The second modification penalizes the attacker by requiring it to apply exponentially increasing hashrate in order to make its chain canonical. We demonstrate two things; (i) the expected cost of carrying out a double-spend attack is weakly higher under $ADESS$ compared to the current PoW protocols and (ii)  for any value of transaction, there is a penalty setting in $ADESS$ that renders the expected profit of a double-spend attack negative.
\end{abstract}
\bigskip
\bigskip

{\small\textbf{Keywords}: Proof-of-Work, blockchain, consensus, double-spend attacks}
\vskip5pt

\thispagestyle{empty}
\pagebreak
\newpage
\onehalfspacing
\pagenumbering{arabic}

\section{Introduction}
The protocol currently used in Bitcoin, Ethereum Classic and most other PoW cryptocurrencies instructs nodes to recognize as canonical the chain with the highest score, which is defined as cumulative mining puzzle difficulty (also referred to as "work"). A principal vulnerability of a blockchain is that an attacker can re-write the history of transactions by forking a previously published block and building a new chain segment containing a different sequence of transactions. If the attacker chain (denoted $\mathcal{A}$) has more cumulative mining puzzle difficulty compared to the incumbent canonical chain (denoted $\mathcal{IC}$), the PoW protocol instructs nodes to recognize the attacker chain as canonical. One motivation for forking a blockchain is to carry out a double-spend attack, whereby the attacker negates a transfer of its tokens that are recorded on chain $\mathcal{IC}$. The safety threat posed by double-spending is not just hypothetical. There have been instances of sizable attacks that have succeeded. For example, from 2018 to 2020 there were several double spend attacks in Ethereum Classic and Bitcoin Gold\footnote{For Ethereum Classic see Andrew Singer \cite{Singer (2020)} and James Lovejoy \cite{Lovejoy (2020)}. For data on double-spend attacks on PoW cryptocurrencies prior to 2020, see the MIT Digital Currency Initiative 51\% reorg tracker (2020) \cite{MIT DCI}}. The double-spend vector of attack raises questions about the security of transactions on a PoW blockchain.

We propose a modification to the protocols introduced by Satoshi Nakamoto in the Bitcoin White Paper (Nakamoto (2008) \cite{Bitcoin White Paper (2008)} and by Gavin Wood in the Ethereum "Yellow Paper" (Wood (2021) \cite{Eth Yellow Paper (2021)}), which we refer to as the "Nakamoto" protocol. We name our protocol modification Absolute Discontinuous Exponential Subjective Scoring ("$ADESS$"), which is designed to increase security against double-spend attacks by raising the cost to make a fork chain canonical without significantly compromising other key dimensions of security and performance of the network and without using any new oracles. $ADESS$ introduces two key modifications to current PoW protocols. One is a criteria for identifying an attacker chain which, unlike current $PoW$ protocols, requires nodes to observe the temporal sequence of blocks. The other is an altered criteria for scoring chains after the attacker has been identified. $ADESS$ is most effective in raising the cost of an attack when the mining puzzle difficulty is adjusted between short intervals of blocks.

\subsection{The two \textit{ADESS} modifications}

The first $ADESS$ modification is a criteria for identification of an attacker chain based on behavior associated with a double-spend attack. The intended victim, Bob, must believe that the transaction sending him tokens is appended to the canonical chain before he conveys an exchange item to the sender, Alice. When he observes the transaction appended to a block, Bob will wait until a few more blocks have been appended to the chain to confirm that the chain remains canonical before conveying the item to Alice. At the same time Alice does not want Bob to know that she is building her own chain, so she will not broadcast her chain until after she has received the item from Bob. We use Alice's delay in broadcasting her chain to form a criteria for identification. Roughly, when comparing two chains with a common ancestor block (the "fork-block"), a penalty is assigned to the chain that was last to broadcast a minimum number of successive blocks, starting from the fork-block.


The second $ADESS$ modification is to penalize the identified attacker chain. When chain $\mathcal{A}$ is identified the criteria for choosing the canonical chain is changed.  Chain $\mathcal{IC}$ and chain $\mathcal{A}$ are assigned scores based on the number of post-fork blocks, with a discount applied to each block in chain $\mathcal{A}$. This requires chain $\mathcal{A}$ to grow at a faster rate than chain $\mathcal{IC}$ to become canonical. To speed up the growth rate, an attacker must increase the amount of hashrate, i.e. puzzle solution guesses per unit of time, on chain $\mathcal{A}$ in expectation. The elevated growth rate causes an increase in mining puzzle difficulty, which requires the attacker to increase hashrate further after each puzzle adjustment in order to maintain the elevated chain growth rate. This process leads to an exponential increase in the hashrate an attacker must apply to exceed its penalty growth rate. Hashrate consumes electricity, so the attacker's cost increases exponentially over time.  The extent to which mining expenditure exceeds the block reward is a sunk cost which the attacker cannot recover even if its chain becomes canonical and it receives block rewards. 

The purpose of $ADESS$ is to confront a double-spend attacker with the prospect, in terms of ex-ante expected value, of an exponentially increasing non-recoverable cost to carry out its attack.

\subsection{Roadmap}

The rest of the paper is organized as follows. In Section \ref{sec: Related Work - The Double -Spend Vulnerability} we review the Nakamoto protocol for validating transactions in PoW blockchains and we describe a prototypical double-spend attack. Then we characterize the claim made by  and Budish \cite{Budish (2018)} and Gervais et.al. \cite{Gervais et.al.} that a double-spend attack can be low cost, followed by the rebuttal of Moroz et.al.'s \cite{Moroz et.al. (2020)} that retaliation by an intended victim induces a war of attrition between attacker and victim in which the outcome is uncertain. We conclude that Nakamoto neither promotes, nor discourages, double-spend attacks. Section \ref{sec: The $ADESS$ Protocol} begins with a statement of the intended goal of $ADESS$ and then describes the two parts of the $ADESS$ protocol modification. The first part is the rule for assigning a penalty to a fork chain. The second part is the operation of the penalty. Section \ref{subsec: The intrinsic deterrent to double-spend attacks} states Proposition 1, which demonstrates that the cost of carrying out a double-spend attack under $ADESS$ is weakly higher compared to Nakamoto. In Section \ref{sec: The Model} we present a baseline model for evaluating the properties of $ADESS$ and characterize the attacker's decision problem. Section \ref{sec: Some Properties of the $ADESS$ Protocol} states Theorem 1, which demonstrates that the $ADESS$ penalty can be tuned to render a double-spend attack ex-ante unprofitable for any value of transaction. Section \ref{sec: Relaxing the Baseline Model Constraints} examines the implications, in terms of blockchain performance, of relaxing two of the four constraints that were placed on the baseline model; partial adjustment of mining puzzle difficulty and network latency. In Section \ref{sec: Additional Considerations} we compare the performance of $ADESS$ and Nakamoto along two security related dimensions; malicious attacks and unobserved forks. In the Appendix, we examine the implications of relaxing the remaining constraints that were placed on the baseline model; multiple chains and non-constant blockchain growth rates. 

\section{Related Work: The Double-Spend Vulnerability}
\label{sec: Related Work - The Double -Spend Vulnerability}

In this section we first describe the mechanics of a double-spend attack. This serves as as a foundation for a review of contributions to the literature on the vulnerability of a $PoW$ blockchains to double-spend attacks, which provide the background and motivation for our proposal to modify the Nakamoto protocol.

\subsection{A double-spend attack under the Nakamoto protocol}
\label{subsec: A Double - Spend Attack}

We employ the following framework. The first block of a blockchain is called the "genesis block" and is assigned the number 0. The number assigned to the child block is the parent block number plus 1. A blockchain has a tree structure in which the genesis block is the root and chain segments form branches that can fan out as the chain grows. A fork-block is the common ancestor block of two or more post- fork chain segments (which we refer to as "chains"). Block \#98  in Figure \ref{fig:A Double-Spend Attack} is the common ancestor of chains $\mathcal{IC}$ and $\mathcal{A}$. Two blocks on different post-fork chains are assigned the same number if each block is the same number of blocks away from the Genesis block. In Figure \ref{fig:A Double-Spend Attack} each chain has a block \# 100. We assume no latency. Each node receives broadcasts instantaneously. 

A double-spend attack is an operation whereby a node can negate the transfer of tokens it sent to a counter-party after receiving the exchange item from the counter-party. Here is how an Alice node double-spends a Bob node. Alice sends a token to a counterparty, Bob, which is appended to block \#100 on chain $\mathcal{IC}$. Alice secretly builds a chain forking from an earlier block, \#98, applies more hashrate to her chain $\mathcal{A}$ compared to chain $\mathcal{IC}$, in order to ensure that her chain grows faster in expectation, and appends to a block in chain $\mathcal{A}$ a transaction whereby she sends all of the tokens in the wallet from which she paid Bob, to other wallets, possibly owned by her. After Alice receives the exchange item from Bob - which may take place only after Bob observes several child blocks - provided chain $\mathcal{A}$ has more blocks, she broadcasts chain $\mathcal{A}$. Thereupon, in accordance with the Nakamoto protocol, the other nodes accept Alice's chain as the true, or 'canonical' chain. Because Alice emptied her wallet on an earlier block, her transfer to Bob is no longer valid.\footnote{Alice's transfer to Bob remains in the pool of transactions that could be appended by a miner to a block in chain $\mathcal{A}$. However, since her original wallet is empty, the transaction would be invalid, as there are no tokens in her wallet to send to Bob.} Alice has succeeded in retrieving her tokens after collecting the exchange item from Bob. The double-spend attack is displayed in Figure \ref{fig:A Double-Spend Attack}. Note that, in order for the attacker chain $\mathcal{A}$ to become canonical, it must have more cumulative mining puzzle difficulty at some future time, starting from block \# 98, compared to the incumbent chain $\mathcal{IC}$. \footnote{The ranking of chains in terms of mining puzzle difficulty does not necessarily match the ranking in terms of hashrate. The reason is that the rate of solving puzzles is stochastic. A lucky string of quickly solved puzzles on one chain will generate more blocks - and more cumulative puzzle difficulty - on that chain compared to another chain with more hashrate but less good luck in solving puzzles.}

\begin{figure}
\begin{center}
\includegraphics[page=1,width=0.9\textwidth,height = 0.22
\textheight]{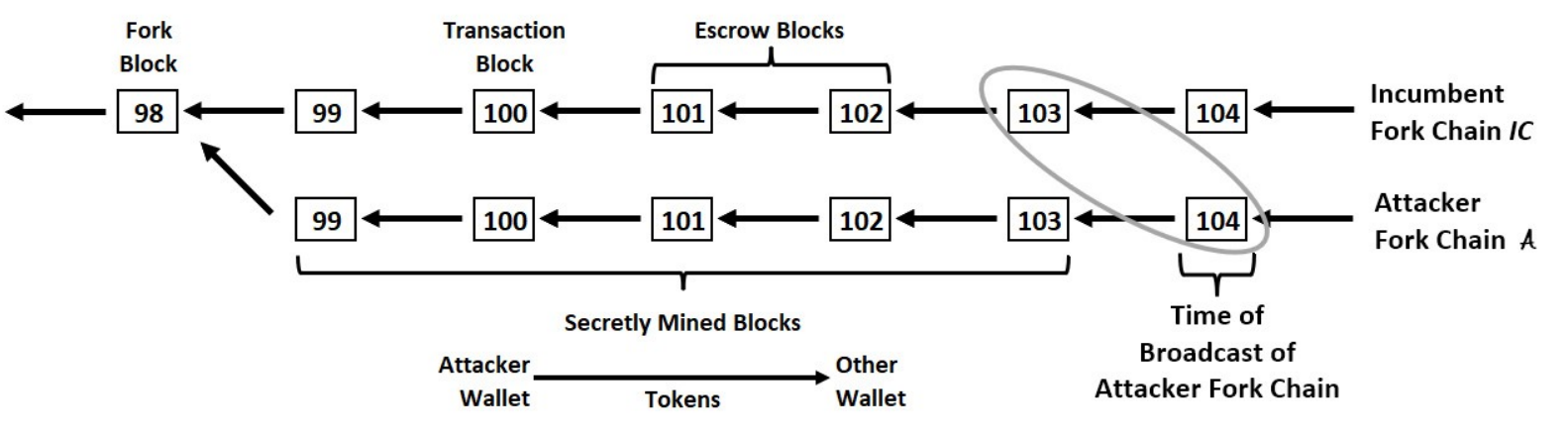}
\end{center}
\caption{A Double-Spend Attack}
\label{fig:A Double-Spend Attack}
\end{figure}

\subsection{Evaluating Nakamoto's claim of double-spend deterrence}
\label{subsec: Nakamoto's Claim and Budish's Critique}

In the 2008 Bitcoin White Paper (Nakamoto (2008)) \cite{Bitcoin White Paper (2008)}, Satoshi Nakamoto recognized the vulnerability of the protocol to the possibility of a double-spend attack. Nakamoto offered two arguments in support of the idea that the vulnerability is limited. The first argument is that the cost of to acquiring the hashrate required to carry out a double-spend attack is a deterrent.\footnote{Nakamoto (2008) \cite{Bitcoin White Paper (2008)} section 11, pp. 6-7.} The second argument is that a miner with sufficient hashrate to carry out a double-spend attack would not risk the depreciation in the exchange value of the currency that  a double-spend attack might cause\footnote{Nakamoto (2008) \cite{Bitcoin White Paper (2008)} section 6, p. 4 states "[A miner] ought to find it more profitable to play by the rules, such rules favor him with more new coins than everyone else combined, than to undermine the system and the validity of his own wealth."}. Budish (2018) \cite{Budish (2018)} and Gervais et.al. (2016) \cite{Gervais et.al.} cast doubt on the former claim by showing that a double-spend attack on Bitcoin could be profitable for a relatively modest value  transaction.\footnote{Budish (2018) \cite{Budish (2018)}) also pointed out that a short seller of Bitcoin could profit from a disruption that caused the exchange value to decline.} 

We construct an example to elucidate this point. To become canonical chain $\mathcal{A}$'s work must exceed chain $\mathcal{IC}$'s work at some time. When chain $\mathcal{A}$ becomes canonical, the attacker retains the double-spend transaction $v$ and it will receive the block rewards on chain $\mathcal{A}$. Equation \ref{eq: The Budish Model} is the ex-ante expected profit of a double-spend attack where chain $\mathcal{A}$ has $N$ post-fork blocks at the time it becomes canonical. It reflects an attacker strategy of forking the head (i.e. rightmost block) of the canonical chain when it sends the transaction into the transaction pool and applying the same hashrate to chain $\mathcal{A}$ as is applied to chain $\mathcal{IC}$ (normalized to 1 unit of hashrate per block) until the victim conveys the exchange item - at block $N -1$. In expectation this implies that chain $\mathcal{A}$ will grow at the same rate and accumulate the same mining puzzle difficulty through post-fork block $N-1$ (which is normalized to 1 block per unit of time). after block $N-1$ the attacker adds an additional $\epsilon$ of hashrate to ensure that chain $\mathcal{A}$ has greater cumulative mining puzzle difficulty in expectation when it is broadcast at block $N$. We study this strategy because it can be implemented on any PoW blockchain regardless of the frequency or rate of adjustment of mining puzzle difficulty.

The notation is as follows. $p_{B}$ - assumed constant for all blocks -  is the block reward paid to the miner who is first to solve the puzzle. $c$ is the attacker's cost, in dollars, of a unit of hashrate - defined as the cost of operating one mining computer for 1 unit of time. $\delta \in (0,1]$ is the discount rate per unit of time. The cost $c$ is incurred at each block and the revenue is realized at the end, when chain $\mathcal{A}$ is broadcast. The crypto to dollar exchange rate is 1. There is no latency in the network.

\begin{multline}
\label{eq: The Budish Model}
\text{Attacker ex ante expected profit} =\\
\text{E}[\underset{\text{discounted revenue}}{\underbrace{\delta^{N-1}(v + p_{B}N)}} - \underset{\text{discounted cost}}{\underbrace{c\big[\sum_{n = 1}^{N-1}\delta^{n - 1} + (1 + \epsilon)\delta^{N-1}\big]}}]
\end{multline}

The minimum profitable transaction can be expressed by the following inequality.

\begin{center}
$v > \delta^{-(N - 1)}c\Big\{\big[\sum_{n = 1}^{N-1}\delta^{n - 1} + (1 + \epsilon)\delta^{N - 1}\big] - \delta^{N -1}Np_{B}/c\Big\}$
\end{center}

Since the time interval between blocks is short - approximately 13 seconds for Ethereum and 10 minutes for Bitcoin - the time discount factor will be very close to 1. Taking the limit as $\delta \rightarrow 1$ yields the expression for minimum profitable transaction size.

\begin{equation}
\label{eq: minimum profitable transaction size - general case}
v > (c - p_{B})N + c\xi
\end{equation}

The first term reflects the attacker's profit from mining chain $\mathcal{A}$. The second term is the cost of the additional hashrate applied to the $N$th post-fork block. When the attacker and incumbent hashrate cost is identical, $(c = p_{B})$ is the market rate of profit and can be normalized to zero. In that case, the expression for minimum profitable transaction size is

\begin{equation}
\label{eq: minimum profitable transaction size - identical costs}
v > c\xi
\end{equation}

In Equations \ref{eq: minimum profitable transaction size - general case} and \ref{eq: minimum profitable transaction size - identical costs} $\xi$ can be arbitrarily small. We conclude that the minimum transaction size required to render the ex-ante expected value of double-spend attack profitable is an increasing function of the spread between the cost of hashrate and the block reward $(c - p_{B})$.  When the attacker and incumbent hashrate cost is identical, the transaction size can be arbitrarily small.\footnote{Another dimension from which the double-spend attack can be analyzed is the expected profit from an attack using less than the honest miner hashrate. For example, an attack in which 30\% of the hashrate applied to chain $\mathcal{IC}$ was applied to chain $\mathcal{A}$ would, in expectation, recoup 30\% of its hashrate cost from block rewards and earn a third of the value of the transaction. In our example, the minimum value of the transaction required to make the attack profitable would be $ v > 3\big(c\xi + (2/3)N\big)$. See Gervais et.al. (2016) \cite{Gervais et.al.} for a detailed simulations in a Markov Decision process framework. Gervais et.al. also evaluate the relationship of the stale block rate and the block reward to the expected profit of a double-spend attack. We do not evaluate those dimensions.}

\subsection{Retaliation}
\label{subsec: Retaliation}

Moroz et.al (2020) \cite{Moroz et.al. (2020)}. pointed out that a victim of a double-spend attack could retaliate by increasing the hashrate applied to the incumbent chain $\mathcal{IC}$. The possibility of retaliation means the blockchain may be less vulnerable to attack than is implied by the analyses of Budish or Gervais et.al., neither of which address the possibility of retaliation. In the resulting war of attrition both attacker and victim are confronted with a similar marginal decision of whether to add another block\footnote{In the Moroz et.al (2020) \cite{Moroz et.al. (2020)}. model, it is assumed that the victim and the attacker build the chains. The other miners step aside until the bifurcation is resolved.}. In the model each numbered block (one for each chain) constitutes a round and the two chains add blocks at the same constant rate.\footnote{The Moroz et.al. (2020) model assumes the puzzle difficulty remains constant, which implies $\gamma = 0$.} At the end of a round, each player decides whether to mine a block and enter the next round. Equation \ref{eq: The Moroz Model} depicts a variant of the {Moroz et.al. (2020)} model. The expected payoff, for either player entering the next round $(N + 1)$, assuming its opponent exits, is

\begin{equation}
\label{eq: The Moroz Model}
\text{E}[\underset{\text{revenue from becoming canonical}}{\underbrace{v + p_{B}(N + 1)}} - \underset{\text{cost of mining next block}}{\underbrace{c(1+\gamma)^{N+1}}}]
\end{equation}

There are equilibrium strategies that support a range of potential outcomes in this symmetric war of attrition. To achieve a determinate outcome would require a restriction on the strategy space or an equilibrium refinement. 

\subsection{A deterrence deficit}
\label{subsec: A deterrence deficit}

Nakamoto's claim that Bitcoin (and by extension, PoW blockchain protocols generally) provide incentives that minimize the security risk of a double-spend attack is not supported. Under free-entry and uniform hashrate costs a profit could be made from double-spending a transaction of any size. Moreover, when an attack is carried out, the possibility of counter-attack by the victim renders the outcome uncertain. The conclusion is that Nakamoto on its own neither deters nor promotes double-spend attacks. Any deterrence must come from outside the protocol. For example, a limitation on mining computers that elevates the attacker's hashrate cost above the honest miner's cost; or financing constraints that limit the capital an attacker can use to carry out a attack. We next begin our discussion of $ADESS$, which incorporates a double-spend attack deterrent inside the protocol.

\section{The $ADESS$ Protocol Modification}
\label{sec: The $ADESS$ Protocol}

$ADESS$ modifies Nakamoto by adding two protocols. The first addition is a  criteria for a node to identify the attack chain and assign a penalty. This is the first subjective discontinuity of $ADESS$. The second addition is the application of the penalty  to the attack chain. The attack chain's score changes from cumulative mining puzzle difficulty to weighted chain length, where the length of the penalized chain is discounted. When a node observes the discounted length of the penalized chain to exceed the un-discounted length of the incumbent canonical chain, the attack chain becomes canonical and its score switches back to cumulative mining puzzle difficulty. This is the absolute exponential scoring and the second subjective discontinuity of $ADESS$. 

\subsection{The penalty assignment}
\label{The Penalty Assignment}

The objective is to assign the penalty to the chain that is built by the double-spend attacker. The proposed assignment rule is  designed to optimize over two goals.\vskip5pt  

(i) Minimize the likelihood that the penalty is assigned to the chain to which the transaction is appended,and 

(ii) Maximize the likelihood that the penalty is assigned to the attacker's chain. 

\subsubsection{The victim's incentive to wait for transaction confirmation}

 When the transfer of tokens is first appended to a block on the canonical chain, the recipient of the tokens has an incentive to require that the chain remain canonical for one or more additional confirmation blocks before the exchange item is released  to its counterparty ( we refer to the number of confirmation blocks, inclusive of the block to which the transaction is appended, as "confirmation depth"). One motivation is to ensure that the receipt of tokens by a node is not later negated as a result of another chain becomes canonical. So-called "uncle-chains" are forks off of the canonical chain that are typically abandoned after a few blocks as consensus forms on a single chain. Uncle-chains arise with considerable frequency on some networks. For example, approximately 8.5\% of valid blocks produced in the Ethereum network since inception are part of post-fork chains that were eventually abandoned.\footnote{\url{https://etherscan.io/uncles}} \footnote{Another reason for increasing confirmation depth may be to protect against the possibility of a double-spend attack. However,as we demonstrated in Section \ref{subsec: Nakamoto's Claim and Budish's Critique}, Nakamoto does not intrinsically deter double-spend attacks at any confirmation depth. Any security provided by increasing confirmation depth under Nakamto arises from the contingent circumstances in which the transaction takes place.} On the other hand, the recipient will prefer to complete the transaction sooner rather than later. 
 
 This suggests a tradeoff between the recipient's desire for security - which increases with the passage of time -  and the discounted value of the transaction - which decreases with the passage of time.  Guo and Ren (2022) \cite{Guo and Ren} elegantly formalized and proved the tradeoff as a relation between confirmation blocks $k$, block propagation delay upper bound $\Delta$, mining rate $\lambda$ and the fraction of honest hashrate $\rho$. Equation \ref{eq: Security Tradeoff} is an upper bound on the probability that the recipient's receipt of tokens is later negated when a majority of hashrate is controlled by honest miners (subject to restrictions on $\lambda\Delta$). \footnote{Equation \ref{eq: Security Tradeoff} is part of Theorem 1 of Gou and Ren (2022). The precondition in Theorem 1 requires that $\lambda\Delta < ln(\rho)ln(1/2)$.}

 \begin{equation}
 \label{eq: Security Tradeoff}
\Big (2 + 2\sqrt{\frac{1}{p - 1}}\Big )4p(1-p)^{k},\;\; \text{where}\; p = \rho\epsilon^{\lambda\Delta}
 \end{equation}

{\noindent{When the attacker applies less than half the hashrate $\rho$, an increase in the number of confirmation blocks $k$, improves security by decreasing exponentially the probability that the receipt of tokens by a node is later reversed (i.e. by decreasing Equation \ref{eq: Security Tradeoff}). We define the lower bound to confirmation depth $k = \alpha$ for a node as that which optimizes the tradeoff between security and timeliness when the node places a zero probability of being the target of a double-spend attack, i.e. $\rho = 1$. This forms a lower bound to the confirmation depth the node actually chooses. When a node is concerned that it might be the target of an attack, $\rho < 1$, it will need to increase its confirmation depth to maintain its security level. For the purposes of our analysis we will assume that $\alpha$  is the same for all nodes.\footnote{Alternatively, $\alpha$ can be viewed as the lower bound of node confirmation blocks.}

 \subsubsection{The attacker's incentive to build its chain in secret}
 \label{subsec: The attacker's incentive to build its chain in secret}

The attacker has an incentive to privately build its chain until after it has received the exchange item from its victim for two reasons. First, the broadcast of the attacker chain reveals that the attacker has emptied its wallet on the chain, which could alert the victim to the attack. Second, if the attacker broadcast a chain that had the most work, it would be canonical and the block to which the victim's receipt of tokens is appended would become part of an uncle chain, which would nullify the victim's receipt of tokens. Recall from Figure \ref{fig:A Double-Spend Attack} that the fork block for the attacker's chain $\mathcal{A}$ must be at a lower height than the block containing the transaction. Therefore, at the minimum, an attacker will not broadcast its chain for an interval of $\alpha$ blocks, counting from the fork-block.\footnote{Guo and Ren \cite{Guo and Ren} prove that any successful double-spend attack strategy can be carried out privately. It follows that an attack strategy that is optimal when the attacker chain is built in secret, is globally optimal.} Given the assumption that $\alpha$ is a lower bound to the confirmation blocks of all nodes, it follows that no double-spend attacker will broadcast its chain until after its victim has observed $\alpha$ confirmation blocks.

\subsubsection{The Penalty Assignment Rule}
These considerations suggest that (i) and (ii) can be optimized by exempting from the penalty the first chain that is observed to have an interval of $\alpha$ valid post-fork blocks and penalizing all other chains that start from the same fork block. This assignment rule reflects the equilibrium behavior of the attacker and the victim. At $\alpha$ blocks, the frequency of uncle blocks is low, which minimizes the probability that the transaction is on a chain that will subsequently be abandoned (goal (i)). The attacker will not broadcast its chain before $\alpha$ confirmation blocks have appeared, which ensures it will be penalized (goal (ii)). 

\textbf{Penalty Assignment Rule}
When a node observes two chains with a common fork-block ancestor and at least one of the chains is of length $\alpha$ blocks, the chain that was first observed to achieve length $\alpha$ blocks is called "chain $\mathcal{IC}$". The other chain is called "chain $\mathcal{A}$". The penalty is assigned to chain $\mathcal{A}$. From the time of the penalty assignment to the time of the removal of the penalty, chain $\mathcal{IC}$ is canonical.\qed

\subsection{The penalty function}

When a penalty assignment has been made, $ADESS$ diverges from the Nakamoto protocol. The canonical chain is chosen by comparing the number of post-fork blocks between chain $\mathcal{A}$ and chain $\mathcal{IC}$, rather than cumulative mining puzzle difficulty. In order to become canonical, the length of a chain $\mathcal{A}$ must exceed  chain $\mathcal{IC}$ by some number of blocks. We refer to the manifold formed by the combinations of blocks - on chain $\mathcal{A}$ and chain $\mathcal{IC}$ - at which chain $\mathcal{A}$ becomes canonical, as the "canonical boundary". When a penalized chain $\mathcal{A}$ has crossed the canonical boundary, the penalty ceases to apply. The protocol for determining the canonical chain reverts to a comparison of cumulative mining puzzle difficulty.\footnote{The $ADESS$ criteria for comparing chains after chain $\mathcal{A}$ has reached the canonical boundary is explored in Appendix \ref{app: Generalizing $ADESS$ to multiple chains and multiple forks}.}

\textbf{Penalty Function}
The scoring formula has two elements.

(i) The scoring criteria for chain $\mathcal{IC}$ and chain $\mathcal{A}$ switches from cumulative mining puzzle difficulty to number of blocks. A per-block penalty weight of $\frac{1}{1 + \xi}$ is applied to each block on chain $\mathcal{A}$, starting at its first post-fork block and continuing until the time that chain $\mathcal{A}$ is observed to have reached the canonical boundary. During the time interval that the penalty is applied to chain $\mathcal{A}$, the  blocks on chain $\mathcal{IC}$ are unweighted. The score of an interval of $N$ blocks on a chain $\mathcal{A}$ is $\frac{1}{1 + \xi}N$ compared to a score of $N$ for chain $\mathcal{IC}$.\vskip5pt 

(ii) After chain $\mathcal{A}$ has reached the canonical boundary, the protocol reverts to the Nakamoto criteria of comparing cumulative mining puzzle difficulty between the two chains, with a re-set of chain $\mathcal{A}$'s score. At the point of crossing, the score of chain $\mathcal{IC}$ is re-set to equal the cumulative mining puzzle difficulty of chain $\mathcal{IC}$, plus a small additional amount $\epsilon$ - which makes chain $\mathcal{A}$ canonical.\qed

Figure \ref{fig:The Penalty Function} displays the two elements of the penalty function. No penalty is assigned until $\alpha$ post-fork blocks on chain $\mathcal{IC}$ have been appended. When that occurs, the penalty is assigned to chain $\mathcal{A}$, starting retroactively from the fork-block. In order to become canonical, chain $\mathcal{A}$ must have $(1 + \xi)N$ blocks at a time when chain $\mathcal{IC}$ is of length $N$.\footnote{It is worth pointing out that $ADESS$ conforms to the Axioms of Leshno and Strack (2020) \cite{Leshno and Strack} since it does not alter the underlying Nakamoto entry and reward structure for miners.}

\begin{figure}[H]
\begin{center}
\includegraphics[page=1,width=0.5\textwidth,height = 0.3\textheight]{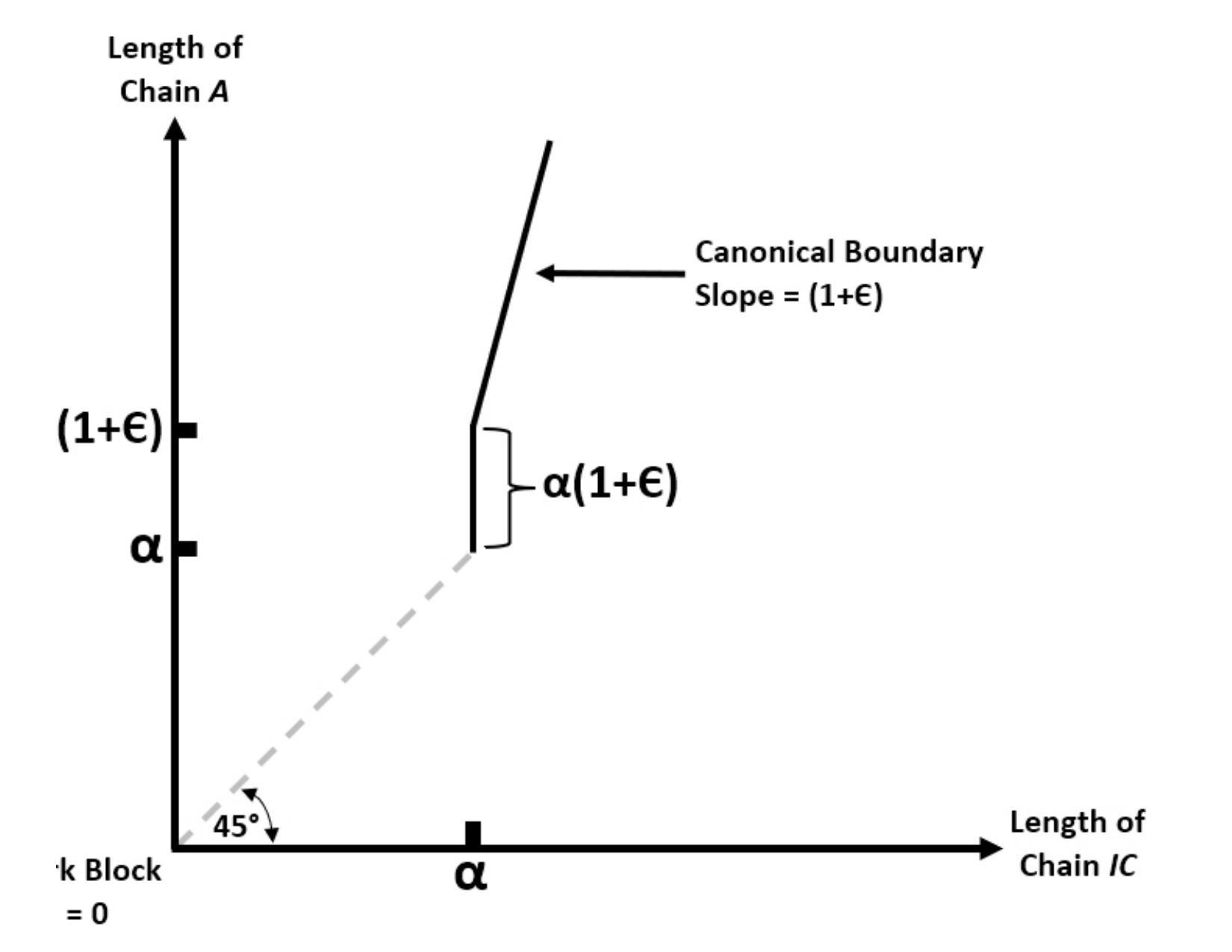}
\end{center}
\caption{The Penalty Function}
\label{fig:The Penalty Function}
\end{figure}

\subsection{The exponentially increasing sunk cost of a double-spend attack} 

Consider the case where there is no latency, target time interval and initial hashrate are 1 and puzzle difficulty fully adjusts after each block. In that case, the expected hashrate required to append the next block in the target time interval is equal to the expected hashrate applied to the prior block. Suppose the attacker applies a constant growth rate of $\gamma$.\footnote{The target blockchain growth rate $T$ can be represented as a Bernoulli Process $\mathbb{E}[T] = D/h$, where $D$ denotes mining puzzle difficulty; the probability of guessing the puzzle solution is $1/D$ and hashrate is $h$. In our example $T =1$ and initial $h = 1$. The Bernoulli Process yields $D =1$ at the first block. Full adjustment implies that an increase of hashrate to to $1+ \gamma$ causes difficulty to increase to $D = 1 + \gamma$ and so forth.} The interaction of a growth rate above target with the mining puzzle difficulty adjustment imposes an ex-ante expected exponentially increasing cost to carry out a double-spend attack. At the first post-fork block the attacker applies $(1+\gamma)$ units of hashrate. At the second block, after the mining puzzle difficulty  has increased to require $(1+\gamma)$ units of hashrate to achieve the target, the attacker must apply $(1 + \gamma)^{2}$ units of hashrate to maintain its growth rate. At each post-fork block $n$ the attacker must apply $(1 + \gamma)^{n}$ units of hashrate. The increase in the attacker's per block cost above the block reward $p_{B}$ is a sunk cost that it cannot recover from block rewards  when chain $\mathcal{A}$ crosses the canonical boundary. Figure \ref{fig: The exponential sunk cost of a double-spend attack} displays the sunk cost that the exponentially increasing penalty imposes on the attacker. It is the ex-ante expectation of exponentially increasing sunk cost that provides the deterrent to launching a double-spend attack.

\begin{figure}[H]
\begin{center}
\includegraphics[page=1,width=0.5\textwidth,height = 0.25\textheight]{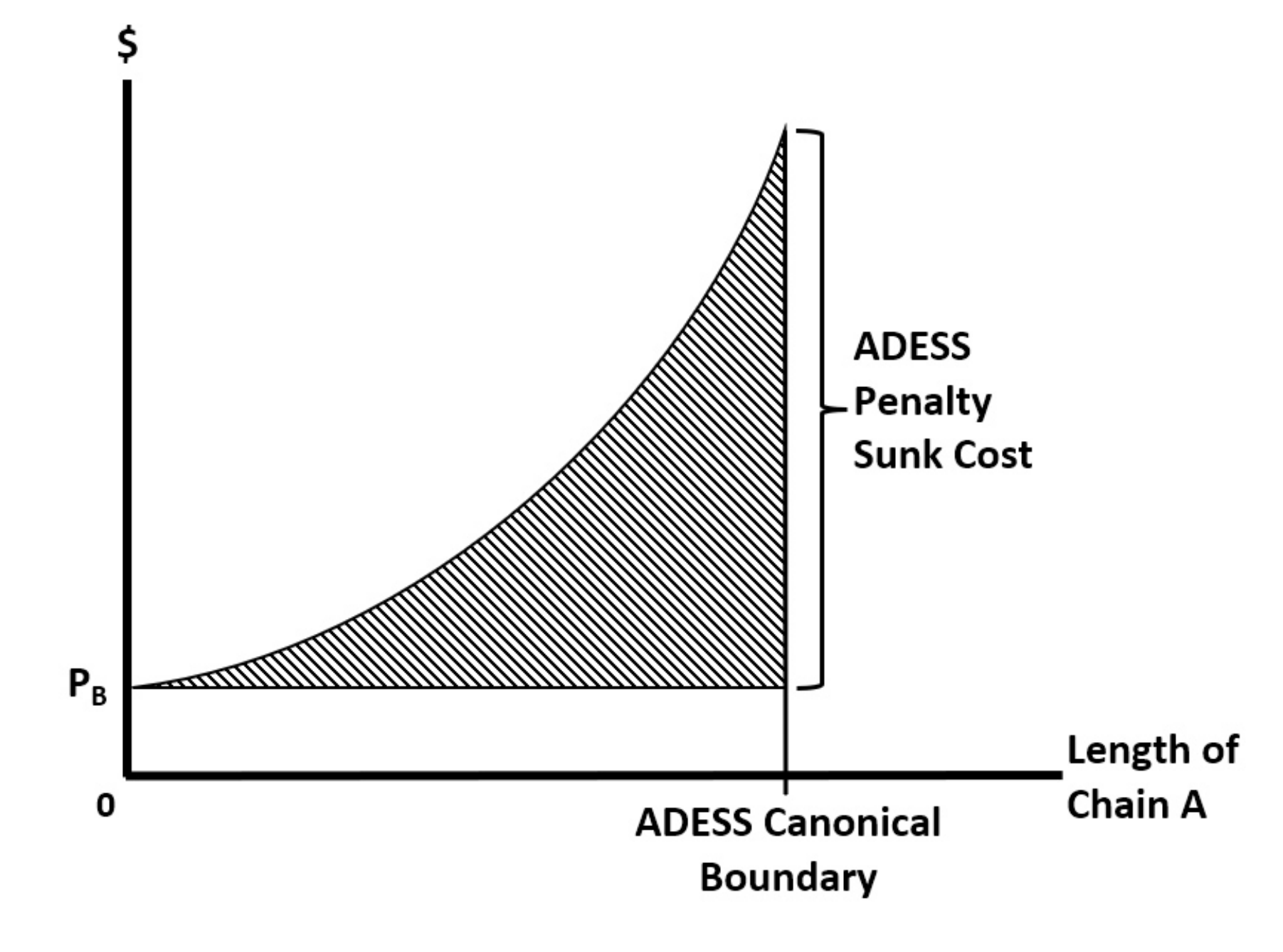}
\end{center}
\caption{The ex-ante expected exponentially increasing sunk cost of a double-spend attack}
\label{fig: The exponential sunk cost of a double-spend attack}
\end{figure}

\subsection{The intrinsic deterrent to double-spend attacks}
\label{subsec: The intrinsic deterrent to double-spend attacks}

The Penalty Assignment Rule and the Penalty Function ensure that, to succeed,  a double-spend attacker must use more hashrate under $ADESS$ compared to Nakamoto. This is a direct result of the penalty and stands independent of any strategic considerations or models. The key implementation issue is the choice $\alpha$. We do not, at present, know the lower bound of confirmation blocks. A goal of future research is to attempt to estimate $\alpha$. Until then, it may be reasonable to use an existing convention, such as Nakamoto's six-block confirmation rule for Bitcoin. It is also possible that, once chosen, $\alpha$ will become a Schelling focal point whereby recipient nodes wait at least $\alpha$ confirmation blocks  before sending their exchange item.\footnote{Schelling (1960) \cite{Schelling (1960)}.} Proposition 1 states the relative increase in hashrate required to carry out a double-spend attack under $ADESS$.

\begin{proposition}[The Increased Cost of Attack Under $ADESS$]
The ex-ante expected hashrate required to carry out a double-spend attack under $ADESS$, where the penalty is bounded away from zero, i.e. $\xi> e$ for some $e > 0$, is weakly greater than the ex-ante expected hashrate required to carry out a double-spend attack under Nakamoto. 
\end{proposition}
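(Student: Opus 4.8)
The plan is to compare, side by side, the minimal ex-ante expected hashrate an attacker must deploy to make chain $\mathcal{A}$ canonical under Nakamoto versus under $ADESS$, and to show that the $ADESS$ requirement dominates the Nakamoto one block-by-block. Under Nakamoto, a successful double-spend requires chain $\mathcal{A}$ to accumulate strictly more cumulative mining puzzle difficulty than $\mathcal{IC}$ at some post-fork height; in expectation (using the Bernoulli-process characterization $\mathbb{E}[T] = D/h$ from the footnote), matching $\mathcal{IC}$'s growth rate through the confirmation interval and then adding an arbitrarily small $\epsilon$ of hashrate on the final block suffices, so the per-block hashrate the attacker must apply is the honest per-block hashrate, normalized to $1$, plus a vanishing increment on the last block. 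Under $ADESS$, once the penalty is assigned (which, by the Penalty Assignment Rule together with the argument in Section \ref{subsec: The attacker's incentive to build its chain in secret} that no attacker broadcasts before $\alpha$ confirmation blocks, is guaranteed to happen), the scoring switches to weighted length: to cross the canonical boundary chain $\mathcal{A}$ must produce $(1+\xi)N$ blocks while $\mathcal{IC}$ produces $N$, i.e. chain $\mathcal{A}$ must grow at rate at least $1+\xi$ relative to $\mathcal{IC}$.

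First I would make precise the two ``expected hashrate required'' quantities as the infimum over attacker strategies of the hashrate that yields a nonnegative probability bound (or, in the baseline deterministic-in-expectation reading, exactly the hashrate that equalizes expected growth rates and then tips the balance). Second, I would invoke the difficulty-adjustment footnote: to sustain a chain-growth rate of $1+\gamma$ against a fully-adjusting difficulty, the attacker must supply $(1+\gamma)^n$ units of hashrate at post-fork block $n$. Third, I would set $\gamma$ to the value forced by the canonical boundary. The penalty weight $\frac{1}{1+\xi}$ means chain $\mathcal{A}$'s effective (discounted) length grows at $\frac{1+\gamma}{1+\xi}$ per unit time while $\mathcal{IC}$'s grows at $1$; to ever cross the boundary the attacker needs $\frac{1+\gamma}{1+\xi} > 1$, hence $\gamma > \xi$, and since $\xi > e > 0$ this forces $\gamma$ bounded away from $0$. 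Fourth, I would compare: at every post-fork block $n$, $ADESS$ demands $(1+\gamma)^n \ge (1+\xi)^n \ge 1$ units, whereas Nakamoto demands only $1$ (or $1+\epsilon$ on the terminal block, with $\epsilon \to 0$). Summing over the relevant block range gives the weak inequality; it is strict as soon as the attack runs for at least one post-fork block with $\xi > 0$, and it collapses to equality only in the degenerate limit $\xi \to 0$, which the hypothesis $\xi > e$ excludes, so in fact we get the stated ``weakly greater'' (and generically strictly greater).

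The main obstacle, and the step I would spend the most care on, is making the ``required hashrate'' comparison genuinely apples-to-apples given that the two protocols use different scoring rules — difficulty versus weighted length — and that chain growth is stochastic. The clean way around this is to lean entirely on the expectation identity $\mathbb{E}[T] = D/h$: in expectation a chain of given hashrate against fully-adjusting difficulty grows at a fixed rate, so ``making $\mathcal{A}$ canonical in expectation'' reduces in both worlds to a deterministic rate comparison, and the footnote's caveat (luck can decouple hashrate from realized difficulty) only affects the realized path, not the ex-ante expectation that the Proposition is about. I would also need to handle the bookkeeping subtlety that under Nakamoto the attacker can fork at the head and only needs the $\epsilon$ bump at the end, whereas under $ADESS$ the penalty is applied retroactively from the fork-block as soon as $\alpha$ blocks appear on $\mathcal{IC}$ — so the honest-cost baseline is the same in both cases and the entire gap is the penalty-induced $(1+\gamma)^n$ versus $1$ discrepancy. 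Finally I would note the edge case where the attacker's best response under $ADESS$ is simply not to attack at all; ``the hashrate required to carry out'' the attack is then formally $+\infty$ or undefined, which only strengthens the weak inequality, so it does not need separate treatment beyond a remark.
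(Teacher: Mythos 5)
Your proposal is correct and follows essentially the same route as the paper's own proof: both fix the post-fork block $N$ at which chain $\mathcal{A}$ becomes canonical, observe that $ADESS$ forces a growth rate of at least $1+\xi$ (hence total hashrate at least $N(1+\xi)$, and exponentially more once difficulty adjusts) while the Nakamoto attack needs only $N$ plus a vanishing increment on the final block, and conclude from $\xi > e > 0$. Your version is more explicit about the $(1+\gamma)^n$ per-block accounting and the apples-to-apples issue, but the core comparison is identical to the paper's.
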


\begin{proof}
Consider any post-fork block $N$ on chain $\mathcal{IC}$ at which chain $\mathcal{A}$ becomes canonical under $ADESS$. The expected hashrate is $N(1 + \xi)$ if $N <\alpha$ or there is no mining puzzle adjustment at any block. A mining puzzle adjustment at any block increases the expected hashrate for succeeding blocks by a exponential factor. Under Nakamoto, the attack strategy from Section \ref{subsec: Nakamoto's Claim and Budish's Critique} to become canonical at block $N$ on chain $\mathcal{IC}$ is $N + e$. Since $\xi > e$, it follows that the expected hashrate under $ADESS$ always exceeds the expected hashrate under Nakamoto. The proposition follows from the fact that hashrate costs money.
\end{proof}

\section{A Model of $ADESS$}
\label{sec: The Model}

In this Section, we formalize a model of the $ADESS$ protocol and evaluate its resilience to double-spend attacks under a set of baseline conditions. The model applies to an attacker who is deciding whether to launch a double-spend attack on a single node to which is has sent tokens in a single transaction. It evaluates the attacker's  'go', 'no go' decision to launch an attack, based on the ex-ante expected profit from an optimal attack strategy. It does not address any change in the attacker's plan after the attack is launched.\footnote{The model derives an upper bound to double-spend vulnerability, since an attack, once launched, may be discontinued before it reaches completion.} We make two assumptions that simplify the analysis and enable us to focus on how the $ADESS$ protocol affects the decision to launch a double-spend attack. The first assumption limits the strategic space in the  game played by miners. The second assumption imposes a deterministic chain growth rate. 

\subsection{Mining game structure}
\label{subsec: Game Structure}

There is a single attacker who attempts to double-spend a single counterparty.  The attacker mines in secret until it has achieved two milestones; chain $\mathcal{A}$ has crossed the canonical boundary and the attacker has received the exchange item from its victim, whereupon the attacker broadcasts chain $\mathcal{A}$ to the network. There are multiple honest miners of chain $\mathcal{IC}$. These miners do not engage in double-spend attacks. They only become aware of a double-spend attack when the attacker broadcasts its chain. If chain $A$ meets the criteria to be canonical, the honest miners switch from mining chain $\mathcal{IC}$ to mining chain $\mathcal{A}$. Honest miners may (or may not) have rational expectations about the possibility of attack and may require a premium profit to compensate for the risk of losing their escrowed  post-fork block rewards on chain $\mathcal{IC}$ if an attack occurs and chain $\mathcal{A}$ becomes canonical. The model is agnostic as to the competitiveness of the mining market. The attacker is the only strategic player in the game we analyze. We characterize the equilibrium by evaluating the attacker's decision to attack. 

\begin{assumption*}[Strategic Space]
\label{ass: Strategic Space}
The attacker makes a strategic choice of whether and when to launch a double-spend attack. Honest miners work on the canonical chain.\qed
\end{assumption*}

\subsection{Certainty equivalence for blockchain growth rate}

Miners make a succession of independent guesses of puzzle solutions over time. The time distribution of new blocks is a Bernoulli Process parameterized by the hashrate and puzzle difficulty, with the mean reverting to the target growth rate by periodic adjustments to puzzle difficulty. This random process can create opportunity for strategic behavior at any point in time, for example when a miner solves a puzzle faster than the target, as analysed by Eyal and Sirer (2014) \cite{Eyal and Sirer (2014)}. However, the decision to launch an attack, which $ADESS$ is designed to influence, is based on, inter alia, the ex-ante expected growth rate of the blockchain, which is a deterministic function of hashrate and puzzle difficulty. In order to focus on the decision to launch a double-spend attack, our model uses the certainty equivalence of the ex-ante expected growth rate.

\begin{assumption**}[Certainty Equivalence of Mining Puzzle Solutions]
\label{Certainty Equivalence of Mining Puzzle Solutions}
Blockchain growth is a deterministic function of hashrate and mining puzzle difficulty. \qed
\end{assumption**}

\subsection{The baseline case}
\label{sec: The Baseline Case}

Initially we impose four restrictions on the model. The first two are relaxed in Section \ref{sec: Relaxing the Baseline Model Constraints} and the latter two are relaxed in the Appendix.\vskip5pt 

(i) After each new block, the mining puzzle difficulty fully adjusts - to set the growth rate at 1 block per unit of time at the hashrate applied to the previous block.\vskip5pt

(ii) All nodes observe blocks in the same temporal order.\vskip5pt

(iii) There is one attacker chain $\mathcal{A}$ and one victim.\vskip5pt

(iv) The attacker chooses a constant growth rate $\gamma$ for chain $\mathcal{A}$.\vskip5pt

In order to reduce the number of variables in the analysis we impose several additional restrictions throughout which do not affect the implications of the model. The growth rate target is 1 block per unit of time; hashrate on chain $\mathcal{IC}$ is 1 per unit of time; the block reward, $p_{B}$, is constant; the the attacker faces the same cost of 1 unit of hashrate, $c$ -  in dollars - as the honest miners; the profit from honest mining is normalized to zero (i.e. $P_{B} = c$)\footnote{The normalization reflects that the attacker's profit from mining is equal to the market return on investment. This may, or may not, imply a restriction on mining market structure. For example, if a mining oligopoly resulted from active miners having a lower cost of hashrate compared to other miners, the attacker would not earn an extra-normal return from its mining. On the other hand, a low cost miner does not need to create a fork (i.e. launch an attack) in order to earn an extra-normal profit. It can earn the profit so by mining chain $\mathcal{IC}$.}; the crypto to dollar exchange rate at 1 and the time discount rate at $\delta \in (0,1]$. Finally, we assume the attacker has a consistent estimate of the victim's confirmation depth, which may exceed $\alpha$ as defined in Section \ref{sec: The $ADESS$ Protocol} based on, inter alia, the transaction value or the identity of, or its relationship with, the sender. In a slight abuse of notation, we use $\alpha$ to denote the confirmation depth in the model of this section.\footnote{Confirmation depth $\alpha$ is an equilibrium outcome of a game between the counterparties that we do not model.} 

\subsubsection{The attacker's strategic plan}
\label{subsec: The attacker's decision problem}

We decompose the attacker's strategic plan into three parts. The first decision is the choice of which block to fork and when to begin building chain $\mathcal{A}$. The second decision is the block on $\mathcal{IC}$ at which chain $\mathcal{A}$ reaches the canonical boundary. The third decision is when to broadcast chain $\mathcal{A}$. We evaluate each decision. 

As to the first decision, the attacker can start to build chain $\mathcal{A}$ by forking a block on the canonical chain at any time. The optimal choice of starting time is the solution of a dynamic optimization problem. A complicating factor in the analysis is that the wait time for the transaction to appear on a block is stochastic. There are two levels of uncertainty. One is the probability of a transaction being chosen by a miner, which can be affected inter-alia by the tip fee offered by the attacker relative to other tip fees in the transaction pool. The other is the probability of a miner being first to solve the mining puzzle. We denote the the number of blocks on chain $\mathcal{IC}$ between the fork-block and the block onto which the transaction is appended by the random variable $\sigma$.

We do not solve the full dynamic programming problem, however, we are able to show that the attacker will fork  the head of the canonical chain. Suppose the attacker forks a block that is $\tau$ blocks earlier than the head  of chain $\mathcal{IC}$. In order for chain $\mathcal{A}$ to cross the canonical boundary when chain $\mathcal{IC}$ has $N$ post-fork blocks, it needs to append $N(1 + \xi)$ blocks in $N - \tau$ units of time. Denote $\gamma$ as the growth rate of the attacker chain required to reach the canonical boundary at that time. The optimal $\tau$ is derived by solving Equation \ref{eq: Timing of commencement of attack} to minimize the hashrate required to cross the canonical boundary. 

\begin{equation}
\label{eq: Timing of commencement of attack}
\underset{\text{chain $\mathcal{A}$ blocks}}{\underbrace{N(1 + \xi)}} =  N\underset{\text{growth rate to reach canonical}}{\underbrace{(1 + \gamma)}} - \tau
\end{equation}

Equation \ref{eq: Timing of commencement of attack} implies that setting $\tau = 0$ minimizes cost (since the growth rate is a function of hashrate which has a unit cost $c$), which means the attacker will fork an end-block. This solution, in turn, implies that the attacker will choose the growth rate $(1 + \xi)$, which matches the penalty.\footnote{The problem can be stated as $\arg\min\limits_{\tau}\gamma = N(1 + \xi) + \tau + 1$.} We do not evaluate the decision of which head-block of the canonical chain the attacker will commence building chain $\mathcal{A}$. Rather, we analyse the attacker's decisions contingent on having forked a blockchain head.

As to the second decision, Equation \ref{eq:$A$'s problem-length of attack} is the attacker's problem for choosing the length of the attack, which is the block $N$ on chain $\mathcal{IC}$ at which it reaches the canonical boundary.\footnote{The exponent of the discount rate is divided by the growth rate to adjust for the intervals of time between blocks on chain $\mathcal{A}$ prior to reaching the canonical boundary, which is less than 1. The bracketed expression  $\lceil N(1+\xi)-1\rceil$ indicates that the number of blocks on chain $\mathcal{IC}$ is rounded up from the value inside the brackets.} The objective is to minimize cost. 

\begin{equation}
\label{eq:$A$'s problem-length of attack}
\arg\min\limits_{N}\; \text{cost of attack}\; = c\sum_{n = 0}^{\lceil N(1 + \xi) - 1\rceil} \delta^{n/(1+\xi)}(\;\underset{\text{Growth rate}}{\underbrace{ 1 + \xi}})^{n}
\end{equation} 

{\noindent{Subject to $N \geq \sigma + \alpha$, which is the minimum interval over which the penalty is applied. It is immediate that cost is increasing in $N$. We conclude that chain $\mathcal{A}$ will reach the canonical boundary at exactly $N = \alpha + \sigma$ post-fork blocks on chain $\mathcal{IC}$. }}

The attacker's third decision is the time to broadcast chain $\mathcal{A}$. The question is whether the attacker will continue to mine chain $\mathcal{A}$ in secret past the time it reaches the canonical boundary. Let $B$ denote the blocks that the attacker secretly mines after chain $\mathcal{A}$ has reached the canonical boundary.  The attacker will have to apply at least 1 unit of hashrate on chain $\mathcal{A}$ on each of the $B$ blocks in order to remain canonical when it is broadcast. This follows from the Penalty Function, which states that once it has reached the canonical boundary, chain $\mathcal{A}$'s score is adjusted to the cumulative mining puzzle difficulty on chain $\mathcal{IC}$ plus a small $\epsilon$. Chain $\mathcal{A}$ needs to keep pace with chain $\mathcal{IC}$ to ensure it will be canonical when it is broadcast. Equation \ref{eq: Attacker's choice of $B$} is the problem the attacker solves to determine when to broadcast its chain.\footnote{In an abuse of notation, $\lambda$ in the attacker's ex-ante decision problem denotes the mean of the distribution of the random variable $\sigma$.}

\begin{multline}
\label{eq: Attacker's choice of $B$}
\arg\max\limits_{B}\pi(B;\xi,N,v, p_{B}) =  \underset{\text{discounted revenue}}{\underbrace{\delta^{(\alpha + \sigma) + B - 1} \{v + p_{B}(\lceil N(1 + \xi)\rceil + B)\}}}\\
- c\underset{\text{discounted cost}}{\underbrace{\big[\sum_{n =0}^{\lceil (\alpha + \sigma)(1 + \xi) - 1\rceil} \{\delta^{n/(1+\xi)}( 1 + \xi)^{n}\} + \sum_{b=0}^{B - 1}\delta^{(\alpha + \sigma) + b}\big]}}
\end{multline} 

{\noindent{Revenue is discounted by the time elapsed from the commencement of the attack to the broadcast of the attack, reflecting that the attacker is not able to spend the block rewards from chain $\mathcal{A}$ until it is broadcast. Costs are discounted from the time they are incurred. The discounted profit from secretly mining one block after the time at which the canonical boundary is crossed (equivalently, past $\alpha + \lambda$ blocks on chain $\mathcal{A}$)is}}

\begin{multline}
\label{eq: marginal cost past the canonical boundary}
M(B) = \underset{\text{discount of mining rewards on ancestor blocks}}{\underbrace{(\delta^{(\alpha + \sigma)} - \delta^{(\alpha + \sigma -1)})\{v + p_{B}(\lceil N(1 + \xi)\rceil + 1)\}}}\\
+ \underset{\text{disc rev from added block}}{\underbrace{\delta^{(\alpha + \sigma)}p_{B}}} - \underset{\text{disc cost of added block}}{\underbrace{\delta^{(\alpha + \sigma)}c}}
\end{multline}

{\noindent{Rearranging the terms of Equation \ref{eq: marginal cost past the canonical boundary} into Equation \ref{eq: Attacker's broadcast decision} shows that the marginal profit from secretly mining past the canonical boundary is negative. It follows that the attacker will broadcast chain $\mathcal{A}$ at $\alpha + \sigma$ post-fork blocks on chain $\mathcal{IC}$.}}
 
\begin{equation}
\label{eq: Attacker's broadcast decision}
M(B) = \underset{< 0}{\underbrace{(\delta^{(\alpha  + \sigma)} - \delta^{(\alpha + \sigma -1) })}}\underset{> 0}{\underbrace{\{v + p_{B}(\lceil N(1 + \xi)\rceil + 1)\}}} + \underset{= 0}{\underbrace{\delta^{(\alpha + \sigma)}(p_{B} - c)}} \leq 0
\end{equation}

{\noindent{We conclude that the attacker's optimal strategic plan is to fork a head-block and grow chain $\mathcal{A}$ at the rate of $(1 + \xi)$ (the first decision); to reach the canonical boundary when chain $\mathcal{IC}$ has $\alpha + \sigma$ post-fork blocks (the second decision) and thereupon to broadcast chain $\mathcal{A}$ (the third decision). If the expected profit from this plan is positive, the attack will be launched.\footnote{This is not a complete characterization of the attacker's decision problem since it  does not pin down the block that the attacker will fork. The conclusions are conditional on $\lambda$. However, this imprecision turns out not to matter for the results we are interested in.}}}

\section{Key Properties of the $ADESS$ Protocol}
\label{sec: Some Properties of the $ADESS$ Protocol}

In this section we state the main result of our model, which is that the penalty parameter $\xi$ can be set to render unprofitable a double-spend attack on a transaction of any size. Equation \ref{eq: Attacker's choice of $B$} is the attacker's ex-ante expected profit. Noting that $N$, $p_{B}$ and $c$ are fixed and $B = 0$,  profitability is determined by the interaction of the value of the transaction $v$ and the penalty parameter $\xi$. For a given $v$, an increase in $\xi$ raises discounted revenue and cost. Equation \ref{eq: marginal condition of an attack} is the the marginal effect of an increase in $\xi$, by $d\xi$,  on the attacker's profit, where $N$ is the interval of blocks on chain $\mathcal{IC}$ between the fork block and the broadcast of chain $\mathcal{A}$ and the marginal increase in $\xi$ causes the number blocks on chain $\mathcal{A}$ required to reach the canonical boundary to increase by one block (i.e. $\lceil N(1 + (\xi + d\xi)) \rceil = \lceil N(1 + \xi)\rceil$ +1).

\begin{multline}
\label{eq: marginal condition of an attack}
M(\xi) = \underset{\text{additional block reward}}{\underbrace{\delta^{N}p_{B}}} - c\underset{\text{cost increase on current blocks}}{\underbrace{\sum_{n = 0}^{\lceil N(1 + \xi) -1\rceil}d/d\xi(\delta^{n/(1+\xi)}(1+\xi)^{n})}}\\
- \underset{\text{cost of additional block}}{\underbrace{\delta^{N}c}}
\end{multline}
The derivative for the discounted cost of an attack at $n$th block on chain $\mathcal{A}$ is
\[d/d\xi(\delta^{n/(1+\xi)}(1+\xi)^{n}) =  n(\xi + 1)^{n-2}\delta^{n/(\xi + 1)}[(\xi +1) - log(\delta)]\]

{\noindent{This expression is positive since the time discount rate $\delta \in(0,1)$, which implies that  $log(\delta) < 0$. Under the assumption that $p_{B} = c$, it is the case that $M(\xi) < 0$ for any $\xi > 0$, i.e. that the ex-ante expected profit from an attack is a declining function of the penalty parameter $\xi$.\footnote{If $d\xi$ does not cause the number blocks on chain $\mathcal{A}$ required to reach the canonical boundary to increase, the leftmost and rightmost terms of Equation \ref{eq: marginal condition of an attack} drop out and $M(\xi)$ remains negative.} The key property of $ADESS$ is that, for every transaction value $v$ there is a penalty parameter $\xi$ above which attack is unprofitable. We state this property in the following theorem.}}

\begin{theorem*}[$ADESS$ Bound on Double-Spend Profitability]
\label{thm: Bound on Double-Spend Attacks}
For any transaction value $v$, there is a  value of the penalty parameter $\underline{\xi}$ above which a double-spend attack is unprofitable. 
\end{theorem*}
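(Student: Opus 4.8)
The plan is to reduce the theorem to a single‑variable growth comparison inside the attacker's optimized profit, and then to upgrade the resulting divergence to a genuine threshold using the monotonicity in $\xi$ already established before the theorem. First I would substitute the optimal attack choices obtained in Section~\ref{sec: The Model} — $B = 0$, a length $N = \alpha + \sigma$ of chain~$\mathcal{IC}$ (treated as the fixed quantity $\alpha + \lambda$ under the model's certainty‑equivalence convention), and chain~$\mathcal{A}$ growth rate $1+\xi$ — into the ex‑ante expected profit (Equation~\ref{eq: Attacker's choice of $B$}), so that the payoff becomes a function of $\xi$ alone, with $v$, $\alpha$, $\delta$ and $p_B = c$ held fixed:
\[
\pi(\xi) \;=\; \delta^{N-1}\bigl(v + p_B\lceil N(1+\xi)\rceil\bigr) \;-\; c\sum_{n=0}^{\lceil N(1+\xi)-1\rceil}\delta^{\,n/(1+\xi)}(1+\xi)^{n},
\]
where $N \ge 1$ is a constant. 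The theorem then asks for a $\underline{\xi}$ with $\pi(\xi) \le 0$ for all $\xi \ge \underline{\xi}$.

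The next step is to show $\pi(\xi)\to-\infty$ as $\xi\to\infty$. The revenue term is $O(\xi)$, since $\lceil N(1+\xi)\rceil$ is linear in $\xi$ while $v$, $p_B$ and $\delta^{N-1}$ are fixed. Discarding all but the last term of the cost sum, the cost is at least $c\,\delta^{\lceil N(1+\xi)-1\rceil/(1+\xi)}(1+\xi)^{\lceil N(1+\xi)-1\rceil}$; using $\delta\in(0,1]$ together with $\lceil N(1+\xi)-1\rceil\le N(1+\xi)$ (which bounds the exponent on $\delta$ by $N$) and $\lceil N(1+\xi)-1\rceil\ge N(1+\xi)-1$, this is at least $c\,\delta^{N}(1+\xi)^{N(1+\xi)-1}$, which grows faster than any polynomial in $\xi$. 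Hence the cost eventually dominates the revenue, $\pi(\xi)\to-\infty$, and in particular $\pi(\xi_0)\le 0$ for some $\xi_0$.

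To promote this to a threshold, I would invoke the marginal computation already carried out via Equation~\ref{eq: marginal condition of an attack}: when $p_B = c$, $\pi$ is strictly decreasing in $\xi$, including at the values of $\xi$ where $\lceil N(1+\xi)\rceil$ increments, at which points the extra far‑out block on chain~$\mathcal{A}$ costs far more than the single additional block reward it earns, so $\pi$ jumps downward. Monotonicity together with $\pi(\xi)\to-\infty$ makes $\{\xi\ge 0:\pi(\xi)>0\}$ an interval bounded above; taking $\underline{\xi}$ to be its supremum (and $\underline{\xi}=0$ if it is empty) yields $\pi(\xi)\le 0$ for every $\xi\ge\underline{\xi}$, which is the assertion.

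I expect the main obstacle to be the divergence step — making the super‑polynomial growth of the cost sum rigorous while the ceiling operator is active, and checking that nothing in $v$, $N$ or $p_B$ carries a hidden $\xi$‑dependence that could overturn the linear‑versus‑super‑polynomial comparison. One further point deserves a sentence: $N = \alpha + \sigma$ is genuinely random, and the clean argument treats it as the fixed expectation $\alpha + \lambda$ per the model's certainty‑equivalence convention; the stochastic reading then follows because $\pi(\xi\mid\sigma)\to-\infty$ for every realization (indeed also as $\sigma\to\infty$), so a routine domination argument carries the divergence through the expectation over $\sigma$.
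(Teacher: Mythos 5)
Your proposal is correct, but it reaches the conclusion by a genuinely different route than the paper. The paper's proof works through curvature: it differentiates the marginal profit $M(\xi)$ of Equation \ref{eq: marginal condition of an attack} a second time, argues the profit is a strictly concave decreasing function of $\xi$ (for $\xi$ large enough that the sign condition $\xi > 1/(n-1)$ holds), and then uses the chord inequality for concave functions to produce a linear upper bound with negative slope that must eventually cross zero. You instead go straight at the asymptotics of the optimized payoff: the revenue term is $O(\xi)$ while the single last summand of the cost, bounded below by $c\,\delta^{N}(1+\xi)^{N(1+\xi)-1}$, is super\-polynomial in $\xi$, so $\pi(\xi)\to-\infty$; the monotonicity from Equation \ref{eq: marginal condition of an attack} is then only needed (and strictly speaking not even needed, since divergence to $-\infty$ already yields a threshold) to tidy the set $\{\xi:\pi(\xi)>0\}$ into a bounded interval. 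Your route is more elementary and more robust: it bypasses the second-derivative computation entirely (which in the paper is only verified on a restricted range of $\xi$ and contains sign slips between ``$M(\xi)<0$'' in the lead-up and ``$M(\xi)>0$'' inside the proof), and the ceiling-induced jump points cause no trouble because you only need a lower bound on one term of the sum. What the paper's concavity argument buys, if one accepts the computation, is structural information about the shape of $\pi(\xi)$ and an explicit estimate of $\underline{\xi}$ from the chord inequality, whereas your limit argument is purely existential. Your closing remark on carrying the divergence through the expectation over the random $\sigma$ addresses a point the paper silently ignores; as you note, it does require checking that the cost lower bound is uniform in $\sigma$ for large $\xi$ (e.g.\ that $\delta(1+\xi)^{1+\xi}>1$ eventually, so the bound at $N=\alpha$ is the worst case), but this is a one-line verification.
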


\begin{proof}
Equation \ref{eq: marginal condition of an attack} shows that ex-ante expected profit from a double-spend attack is a decreasing function of the penalty parameter $\xi$. The derivative of $M(\xi)$ with respect to $\xi$ is

\begin{multline}
\label{eq: second derivative}
\frac{d}{d\xi}M(\xi) = - c\sum_{n=0}^{\lceil N(1+\xi)-1\rceil}d^{2}/d\xi^{2}(\delta^{n/(1+\xi)}(1+\xi)^{n}) < 0 
\end{multline}
The second derivative for the discounted cost of an attack at $n$th block on chain $\mathcal{A}$ is

\begin{multline}{\nonumber}
d^{2}/d\xi^{2}(\delta^{n/(1+\xi)}(1+\xi)^{n}) = \\
n(\xi + 1)^{n-4}\delta^{n/(\xi + 1)}[n -(n-1)(\xi + 1)]2log(\delta) + (n-1)(\xi + 1)^{2}
\end{multline}

{\noindent{This expression is positive when the term $n -(n-1)(\xi + 1) <0$, which holds for $\xi > 1/(n-1)$. Equation \ref{eq: second derivative} applies whether or not $d\xi$ induces an additional block to be added to chain $\mathcal{A}$ before reaching the canonical boundary. $M(\xi) > 0$ and $\frac{d}{d\xi}M(\xi) > 0$, imply that, for any transaction value $v$, there is a value of $\xi$ above which the expected profit from a double-spend attack is a strictly concave  decreasing function of the penalty parameter, with a discontinuous drop in profit where an additional block is added. It follows that there is a value of $\xi$ for which $\pi(\xi) < 0$.\footnote{The proof is as follows. By strict concavity, for $\xi' > \xi" \; \frac{\pi(\xi') - \pi(0)}{\xi'} \leq \frac{\pi(\xi") - \pi(o)}{\xi"}\implies \pi(\xi') \leq (\xi'(\pi(\xi") - \pi(0)) + \pi(0)$. Since $\pi(\cdot)$ is decreasing in $\xi$, profit is negative for any $\xi > \xi'$.} This proves the Theorem. Figure \ref{fig: The Attacker's Profit Function} displays the relationship between $\xi$ and the attacker's profit.}}

\begin{figure}
\begin{center}
\includegraphics[page=1,width=0.5\textwidth,height = 0.3
\textheight]{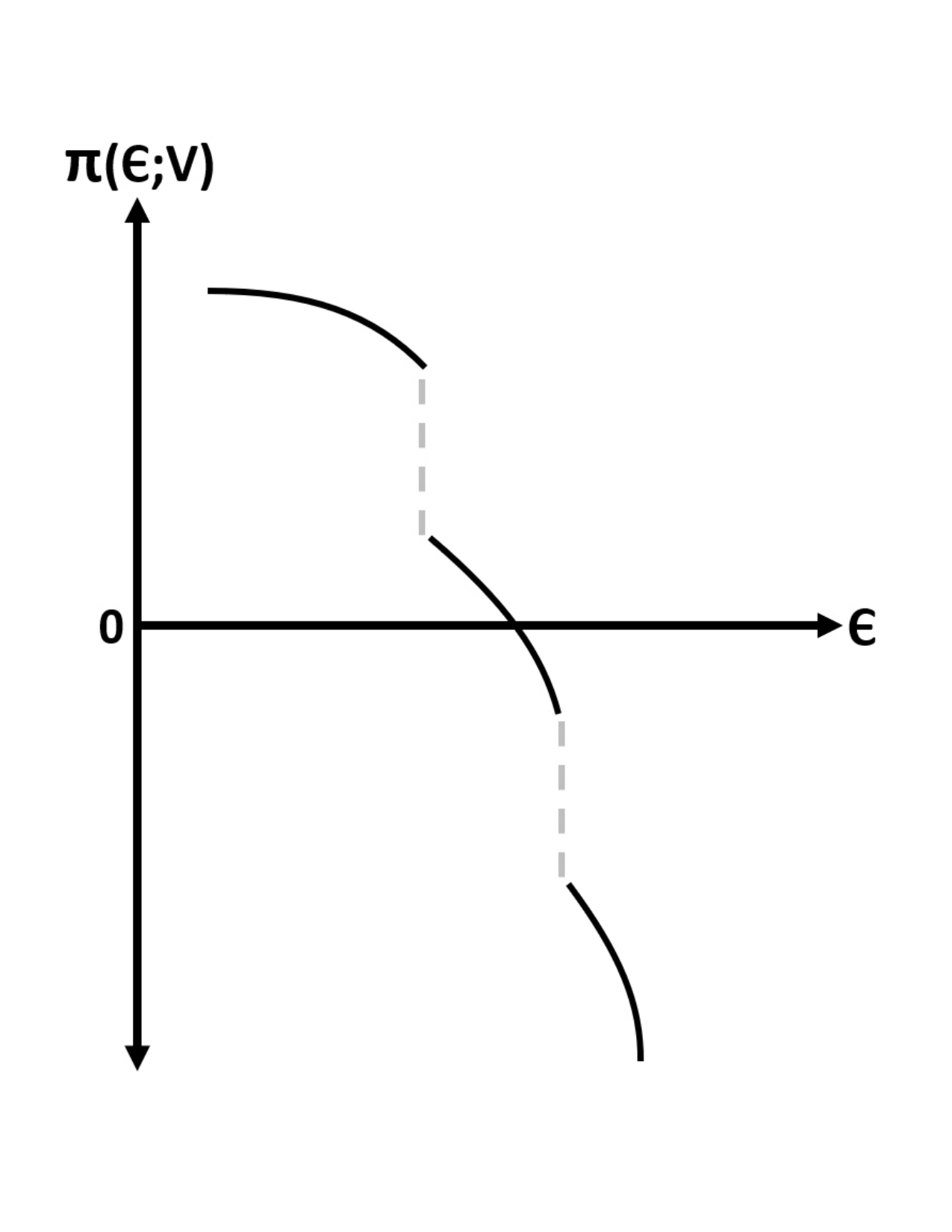}
\end{center}
\caption{The Attacker's Profit Function}
\label{fig: The Attacker's Profit Function}
\end{figure}
\end{proof} 

$ADESS$ has the analogous property that every penalty setting renders unprofitable some interval of transaction values.

\begin{corollary*}
For any penalty parameter $\xi > 0$ there corresponds an interval of transaction values $[0,v)$ for which a double-spend attack is unprofitable.
\end{corollary*}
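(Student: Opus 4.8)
The plan is to derive the corollary from the fact that the attacker's ex-ante expected profit is monotone (indeed affine and strictly increasing) in the transaction value $v$ once the penalty $\xi$ is fixed — the mirror image of the monotonicity in $\xi$ used to prove Theorem 1. Fix any $\xi>0$. First I would observe that the attacker's optimal strategic plan characterized in the preceding section does not depend on $v$: forking a head-block ($\tau=0$) and growing $\mathcal{A}$ at rate $1+\xi$ minimizes hashrate cost by Equation~\ref{eq: Timing of commencement of attack} irrespective of $v$; the optimal length is $N=\alpha+\sigma$ irrespective of $v$; and $B=0$ is optimal for every $v\ge 0$, since the leading term of $M(B)$ in Equation~\ref{eq: Attacker's broadcast decision} is negative whenever $v+p_{B}(\lceil N(1+\xi)\rceil+1)>0$. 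Hence the attacker's ex-ante expected profit equals Equation~\ref{eq: Attacker's choice of $B$} evaluated at $B=0$, $N=\alpha+\sigma$; as a function of $v$ with $\xi$ held fixed this is affine with slope $\delta^{\alpha+\sigma-1}>0$. Write it $\pi(v;\xi)$; it is continuous and strictly increasing on $[0,\infty)$.

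The single substantive step is to show $\pi(0;\xi)<0$. Setting $v=0$, the profit collapses to discounted block rewards on the penalized chain minus its discounted, penalty-inflated mining cost, $\pi(0;\xi)=\delta^{\alpha+\sigma-1}p_{B}\lceil N(1+\xi)\rceil-c\sum_{n=0}^{\lceil N(1+\xi)-1\rceil}\delta^{n/(1+\xi)}(1+\xi)^{n}$. Using $p_{B}=c$ and $\delta\le 1$, and noting that $N=\alpha+\sigma\ge 1$ together with $\xi>0$ forces $M:=\lceil N(1+\xi)\rceil\ge 2$, it suffices to show the cost sum strictly exceeds $M$. At $\delta=1$ this is just $M<\sum_{n=0}^{M-1}(1+\xi)^{n}=\bigl((1+\xi)^{M}-1\bigr)/\xi$, which is immediate from $(1+\xi)^{M}\ge 1+M\xi+\binom{M}{2}\xi^{2}>1+M\xi$ for $M\ge 2$, $\xi>0$; the strict inequality then persists for $\delta$ in a neighborhood of $1$ by continuity, which is the operative regime throughout (short block intervals, $\delta\approx 1$). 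The intuition is that with no transaction to negate the attacker merely reproduces existing work, paying the exponentially increasing penalty hashrate cost of Proposition 1 for blocks that return only the market reward $p_{B}=c$, so it cannot break even.

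Given $\pi(0;\xi)<0$ with $\pi(\cdot;\xi)$ continuous and strictly increasing, I would set $\bar v(\xi):=\sup\{v\ge 0:\pi(v;\xi)<0\}$, which is strictly positive (possibly $+\infty$). Then $\pi(v;\xi)<0$ for every $v\in[0,\bar v(\xi))$, i.e.\ a double-spend attack is unprofitable on the whole non-empty interval $[0,\bar v(\xi))$, as claimed; if $\bar v(\xi)<\infty$ it is the unique root of $\pi(\cdot;\xi)$ and the interval is maximal. I expect the only real obstacle to be the strict inequality $\pi(0;\xi)<0$ — everything else is monotonicity bookkeeping — and within that step the delicate point is the interplay between the discount factor $\delta$ and the exponential cost growth, which is why the clean argument is run at $\delta\to 1$ and extended by continuity, consistent with the paper's standing approximation $\delta\approx 1$.
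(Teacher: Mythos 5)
Your proposal follows essentially the same route as the paper's own proof: evaluate the profit function at $v=0$, argue it is strictly negative under $p_{B}\le c$, and then use the fact that profit is affine and increasing in $v$ to obtain a non-empty interval $[0,\bar v(\xi))$ of unprofitable transaction values. The only difference is that you actually verify the step the paper dismisses as ``immediate'' --- showing the penalty-inflated cost sum $\sum_{n=0}^{M-1}(1+\xi)^{n}$ strictly exceeds the reward $M$ at $\delta\to 1$ --- which is a welcome tightening rather than a different argument.
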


\begin{proof}
Set $v=0$ in Equation \ref {eq: Attacker's choice of $B$}. Under our assumption that $p_{B} \leq c$, it is immediate that $\pi(\xi,v = 0) < 0$ for any $\xi >0$. It follows that $\pi(\xi,v) < 0$ for any $v\in [0,- \pi(\xi,v = 0)$. This proves the Corollary.
\end{proof}

\section{Relaxing the baseline model constraints}
\label{sec: Relaxing the Baseline Model Constraints}

In this section we relax two of the restrictive assumptions in the baseline model of section \ref{sec: The Baseline Case} and evaluate how the key properties of the $ADESS$ protocol are affected in each instance.

\subsection{Incomplete or infrequent adjustment of mining puzzle difficulty}

There are two cases to consider here. The first case occurs when the puzzle difficulty does not fully adjust after each block (but it does adjust symmetrically). Consider the case where the hashrate applied to the first block is $(1 +\gamma)$. Under Assumption 2 the puzzle is solved in $1/(1 + \gamma)$ units of time. Full adjustment implies the puzzle difficulty is increased to require a hashrate of  $(1 + \gamma)$ to solve the puzzle for block \#2. With a partial adjustment factor $\beta \in (0,1]$, the required hashrate is $(1 + \beta\gamma)$, which a lesser increase than full adjustment.\footnote{In the context of footnote 17, partial adjustment implies that an increase of hashrate to to $1+ \gamma$ causes difficulty to increase to $D = 1 + \beta\gamma$, and so forth.} The hashrate growth of a sequence of blocks is the sum $(1 + \gamma) + (1 +\beta\gamma)(1 + \beta\gamma)+...+ (1+\gamma)(1 + \beta\gamma)^{n-1} + ...$. $\beta$ slows the exponential growth of the attacker's cost. The only effect this partial adjustment has on the $ADESS$ protocol is to reduce the effective penalty by the mapping $\xi \implies \beta\xi$. The effective penalty can still be set at any value.

The second case occurs when the mining puzzle adjustment is made periodically, after an interval of blocks has been appended to a chain (called an "epoch"). For example, in Bitcoin the puzzle difficulty is adjusted every 2016 blocks. Inside of an epoch, the cost of achieving the penalty growth rate grows linearly rather than exponentially. The increased cost of growing chain $\mathcal{A}$ at the rate of $(1 + \xi)$ is an additional $c\xi$ per block on chain $\mathcal{IC}$. If the block reward equals cost, this cost is recoverable from block rewards. The conclusion is that, for a given penalty parameter $\xi$, the effect of the penalty on attack cost increases with the frequency of mining puzzle adjustments.

\subsection{Network latency} 
\label{subsec: Network latency}

Consider a synchronous network with a block propagation delay upper bound of $\Delta > 0$ units of time and no other channel of communication between nodes. In this setting it is possible that some nodes receive the broadcast of post-fork block $\alpha + \sigma$ on chain $\mathcal{IC}$ before it receives the broadcast of post-fork block $\lceil N(1 + \xi)\rceil$ on chain $\mathcal{A}$, even when $\mathcal{A}$ was the first to initiate a broadcast. These nodes will not recognize chain $\mathcal{A}$ as canonical. Under $ADESS$, a split in opinion at the canonical boundary will persist if the following two conditions hold: (i) a portion of honest miners do not observe that chain $\mathcal{A}$ has reached the canonical boundary and continue to mine chain $\mathcal{IC}$ and the other portion mine chain $\mathcal{A}$ and (ii) the hashrate applied to mining chain $\mathcal{IC}$ is weakly below the hashrate applied to chain $\mathcal{A}$. In this case, there is more hashrate applied to chain $\mathcal{A}$ than chain $\mathcal{IC}$, but it is a constant amount. As a result, miners of chain $\mathcal{A}$ will observe that their chain remains canonical since it has more post-crossing cumulative mining puzzle difficulty than chain $\mathcal{IC}$, and miners of chain $\mathcal{IC}$ will observe their chain remains canonical since cumulative mining puzzle difficulty on chain $\mathcal{A}$ is not growing exponentially while it remains penalized. 

The attacker can prevent a breakdown of consensus at the canonical boundary by broadcasting $(N + \Delta)(1+\xi)$ blocks on chain $\mathcal{A}$ at the time $N$ post-fork blocks on chain $\mathcal{IC}$ are broadcast. This requires chain $\mathcal{A}$ to grow at the accelerated rate of $1 + \{\xi + \frac{\Delta}{N}(1 + \xi)\}$. A node that receives chain $\mathcal{A}$'s broadcast $\Delta$ units of time in the future while receiving the broadcast of chain $\mathcal{IC}$ without delay will observe that chain $\mathcal{A}$ reached the canonical boundary at the time chain $\mathcal{IC}$ has $N + \Delta$ post-fork blocks. Theorem 1 and Corollary 1 continue to hold when the growth rate $\xi$ replaced by the accelerated growth rate.\footnote{Note that this result holds when latency slows the growth rate of chain $\mathcal{IC}$ due to conflicts arisng from uncle chains.}

\section{Blockchain Security of \textit{ADESS} vs. Nakamoto}
\label{sec: Additional Considerations}

In this section we compare $ADESS$ to Nakamoto with respect to two matters not previously covered; a malicious attack and an uninformed node.  Our conclusion is that, in these two matters, $ADESS$ does not introduce any serious security vulnerabilities that are not present in Nakamoto. 

\subsection{Malicious attack} 
\label{subsec: Malicious Attack}
We define a malicious attack as an attempt to break the consensus around one canonical chain. In Section \ref{subsec: Network latency} we showed that a permanent split in the consensus opinion under $ADESS$ could arise when chain $\mathcal{A}$ reaches the canonical boundary if the crossing is not seen by some honest miners. In the event of a split where $\leq 50\%$ of honest hashrate was applied to chain $\mathcal{IC}$, a malicious attacker could withdraw its hashrate and the split would be permanent. To achieve this result the attacker must incur (a) the exponentially increasing cost of reaching the canonical boundary and (b) the cost of eclipsing some miner nodes around the canonical boundary to ensure that the requisite number do not see chain $\mathcal{A}$ reach the canonical boundary. Once that has been achieved, honest miners will work on both chains under the assumptions of our model. More generally, consensus cannot be re-established under the $ADESS$ protocol. 

A permanent split in the consensus opinion under Nakamoto can be achieved if the attacker either (i) indefinitely applies hashrate to ensure that chain $\mathcal{IC}$ and the attacker's chain $\mathcal{A}$ grow at the same rate or (ii) the attacker permanently eclipses miners representing $< 50\%$ of hashrate so as to prevent them from seeing chain $\mathcal{IC}$ while broadcasting chain $\mathcal{A}$ to them. In that case, the eclipsed miners will mine chain $\mathcal{A}$ and the un-eclipsed miners will apply their hashrate to chain $\mathcal{IC}$, which they observe to have the most cumulative mining puzzle difficulty. 

This suggests a tradeoff between $ADESS$ and Nakamoto. Under $ADESS$ a malicious attacker must overcome the penalty to carry out the attack, which implies a larger initial investment to launch the attack compared to Nakamoto, but no long-term expenditures. On the other hand, Nakamoto requires a perpetual application of computing power to maintain the consensus split. Figure \ref{fig: ADESS v Nakamoto attack} is a graphical display of the relative costs of a malicious attack under each protocol. If costs are discounted at the rate of $\delta\in (0,1)$ per unit of time, the early costs incurred under $ADESS$ will be weighted more heavily than the further out costs incurred under Nakamoto. In that case, it is possible that the present value of the $ADESS$ attack cost could be higher than the present value of the Nakamoto attack cost, even when the actual $ADESS$ cost is below the Nakamoto cost. There is no apriori way to rank the difference in the present value cost of attack between the two protocols. 

\begin{figure}
\begin{center}
\includegraphics[page=1,width=0.65\textwidth,height = 0.3\textheight]{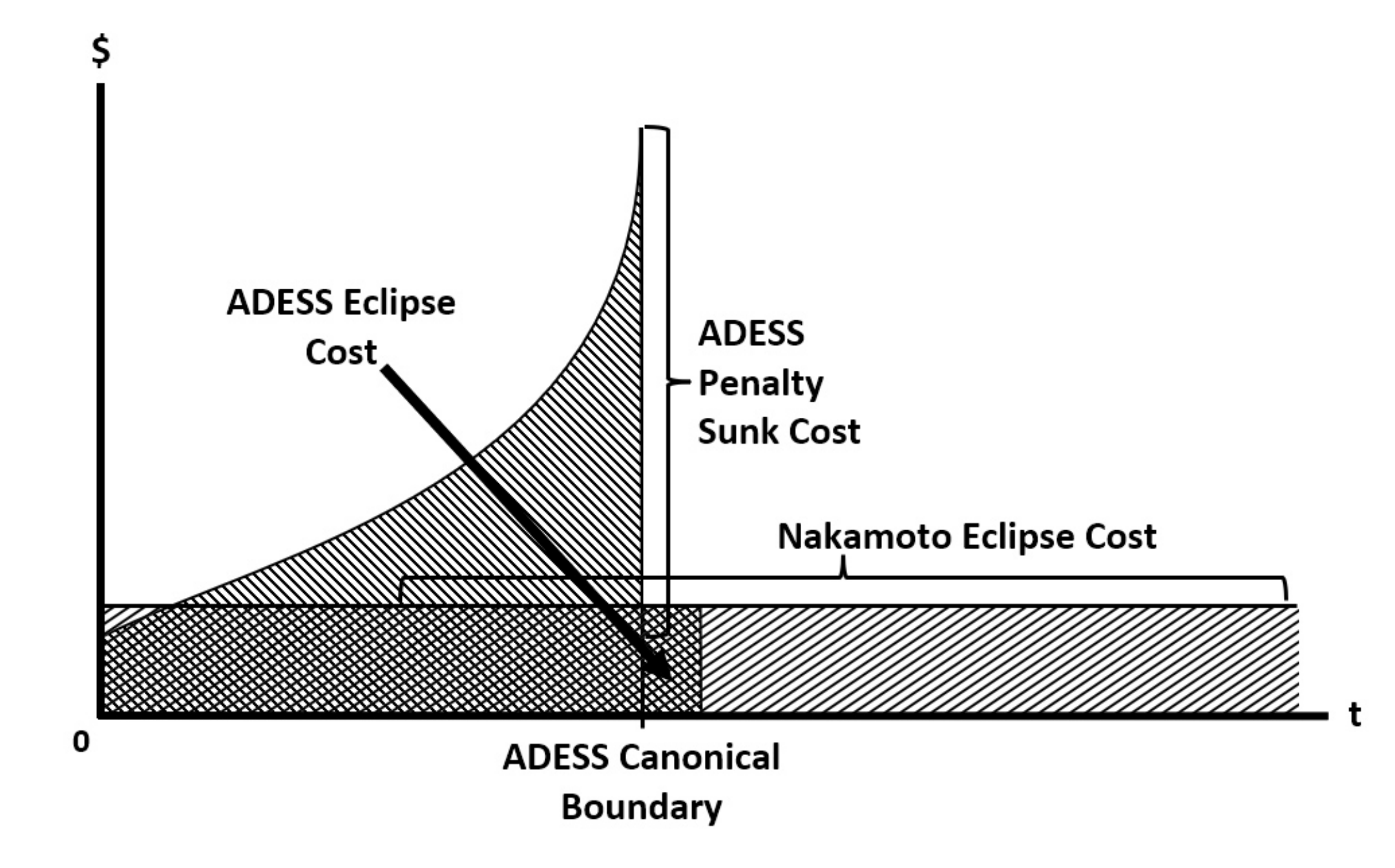}
\end{center}
\caption{\textit{ADESS} v Nakamoto - cost of malicious attack}
\label{fig: ADESS v Nakamoto attack}
\end{figure}

\subsection{A node is not connected when a fork occurs}

A node that is not connected to the network when the fork occurs does not observe the temporal order of the broadcast of the initial blocks of the post-fork chain segments. Under $ADESS$, such a node cannot make a determination of whether, or to which chain, a penalty should be assessed and therefore cannot determine which chain is canonical. By contrast, Nakamoto does not require a node to observe the temporal sequence of new blocks. A node that newly enters the network can determine the canonical chain by comparing cumulative mining puzzle difficulty. This is a weakness in the security of $ADESS$ compared to Nakamoto. Nevertheless, it is unclear what practical effect this weakness represents. $ADESS$ provides an incentive for miners and frequent transactors, such as exchanges, to maintain several nodes - which are inexpensive - in order to ensure it is always connected to the network. Less frequent participants in the network can infer the canonical chain by observing which chain is being actively mined.

\section{Conclusion}

We reviewed the literature on the vulnerability of PoW blockchains to double-spend attacks that is relevant to our analysis. That literature shows that such attacks can be profitable at any transaction size and that the possibility of retaliation by the victim is not necessarily an effective deterrent. We then proposed a modification to the standard PoW protocol, called $ADESS$, that increases ex-ante expected  cost of carrying out a double-spend attack by assigning a penalty to the scoring of the attacker's fork chain. There are two parts to $ADESS$. The first part is a criteria for identifying the attacker's chain and assigning the penalty to it. We argued that identification is made possible by the fact that a rational counterparty will convey its item of value only after the the block to which the transfer transaction is appended and several descendant confirmation blocks have been added to the canonical chain. A rational attacker will not broadcast its chain until after it has received the item of value from its counterparty, which requires it to wait until after the confirmation blocks have appeared before broadcasting its chain. The resultant delay in broadcasting the attacker's chain suggests the criteria of assigning the penalty to the chain that does not broadcast its blocks until after the other chain has added several post-fork blocks. 

The second part of $ADESS$ is the penalty function. Once the penalty has been assigned, the criteria for comparing chains shifts from cumulative mining puzzle difficulty to the number of blocks. The penalty discounts the value of block on the penalized chain. The penalized chain must therefore grow at a faster rate to overcome the penalty. The interaction of faster growth with mining puzzle difficulty adjustments that require ever higher hahsrate to maintain an elevated blockchain growth rate, result in a exponential increase in hashrate - and cost - for the attacker's chain to overcome the penalty. We showed that the expected cost of carrying out a double-spend attack under $ADESS$ is weakly higher than Nokamoto. 

We constructed a model which enabled us to prove that, for a transaction of any size, there is an $ADESS$ penalty that renders a double-spend attack unprofitable. We then demonstrated that the main results continue to hold in the presence of incomplete adjustment of mining puzzle difficulty and network latency. Finally, we argued that the requirement that nodes observe the temporal sequence of blocks in $ADESS$ does not practically create a serious impediment to the participation of nodes in the network and that there is no apriori way to rank the vulnerability to a malicious attack between $ADESS$ and Nakamoto.

\section*{Acknowledgments}
We thank Glenn Ellison and Rainer B\"ohme for thoughtful comments and guidance. All errors are our own. \\

\begin{appendix}

\section{Generalizing $ADESS$ to multiple chains and multiple forks}
\label{app: Generalizing $ADESS$ to multiple chains and multiple forks}

$ADESS$ can be extended to a general blockchain network in which there are multiple forks, each with two or more descendant chains where penalty assignments are made at each fork. To accommodate this, we generalize the $ADESS$ protocol so that a penalized chain cannot become canonical until it has overcome all penalties assigned to it.  Proposition 1 and its Corollary prove that there is always exactly one canonical chain.

\subsection{Tree graph representation of a blockchain network}

A blockchain network can be represented as a directed tree graph. A chain is a unique directed path running from a fork-block to a chain head.\footnote{Unique paths, running from the root node to end nodes, are a feature of directed tree graphs.} At any time $t$ there are $N$ fork-blocks in the blockchain network, each one denoted $f_{n},\; n \in \{1,...,n,...,N\}$ and $M$ heads, each one denoted $B_{m},\; m \in \{1,...,m,...,M\}$. We also denote $B_{m}$ as the chain connecting a fork-block to a chain head. Figure \ref{fig: A Blockchain Network} displays the blockchain network as a directed tree graph at $t$ periods after the Genesis block was broadcast.

\subsection{Penalty nomenclature}

We simplify the presentation by assuming that an attacker forks the parent of the block to which the transaction is appended, so that $\lambda = 0$.\footnote{From Section \ref{subsec: The attacker's decision problem}, $\lambda$ is the number of blocks in chain $IC$ (in the generalized context the baseline chain $B_{m'}$) between the fork-block and the block onto which the transaction is appended. $\lambda > 0$ at fork $n$ would be denoted $\alpha + \lambda(f_{n}):f_{n}$, which would not affect Proposition 1.} The baseline chain for fork-block $f_{n}$ is the first chain to broadcast $\alpha$ post-fork blocks, which is denoted $\alpha:f_{n}$. For example in Figure \ref{fig: A Blockchain Network}, $\alpha:f_{2}$ indicates that chain $B_{4}$ is the baseline chain relative to $f_{2}$, since it was the first chain to broadcast a chain segment with $\alpha$ blocks, starting at $f_{2}$. A more complicated example involves chain $B_{1}$. $\alpha:f_{1}$ indicates that $B_{1}$ was the first chain to broadcast a chain segment with $\alpha$ blocks, starting at $f_{1}$ and  $\alpha:f_{3}$ indicates that chain $B_{2}$ was the first chain to broadcast a chain segment with $\alpha$ blocks, starting at $f_{3}$. In this case $B_{1}$ is the baseline chain for fork $f_{1}$ and is not the baseline chain for fork $f_{3}$. At each fork $f_{n}$ a penalty may (or may not)  be assigned to one or more chains $B_{m}$. When a assignment is made, the penalized chain is compared to the baseline chain $B_{m'}$.\footnote{A baseline chain at one fork-block can be a penalized chain at another fork-block.}

The penalty is "active" at fork-block $f_{n}$ so long as the penalized chain has not overcome the penalty .  The tuple $<B_{m}:\widehat{f}_{n}:B_{m'}:t>$ denotes that $B_{m}$ is actively penalized relative to $B_{m'}$ at $f_{n}$ at time $t$. The penalty is "inactive" after it has been overcome by the penalized chain. An inactive penalty is denoted $<B_{m}:f_{n}:B_{m'},t>$. In the latter case, $B_{m}$ has become the baseline chain relative to fork $f_{n}$. In general, $\widehat{f}_{n}$ indicates a currently active penalty and $f_{n}$ indicates a past penalty that has been overcome. The set of active and inactive penalties assigned to $B_{m}$ at time $t$ is denoted by a list such as $\{<B_{m}:\widehat{f}_{n}:B_{m'}:t>, <B_{m}:f_{n+i}:B_{m'+j}:t>, ...\}$.

\begin{figure}
\begin{center}
\includegraphics[page=1,width=0.55\textwidth,height = 0.27
\textheight]{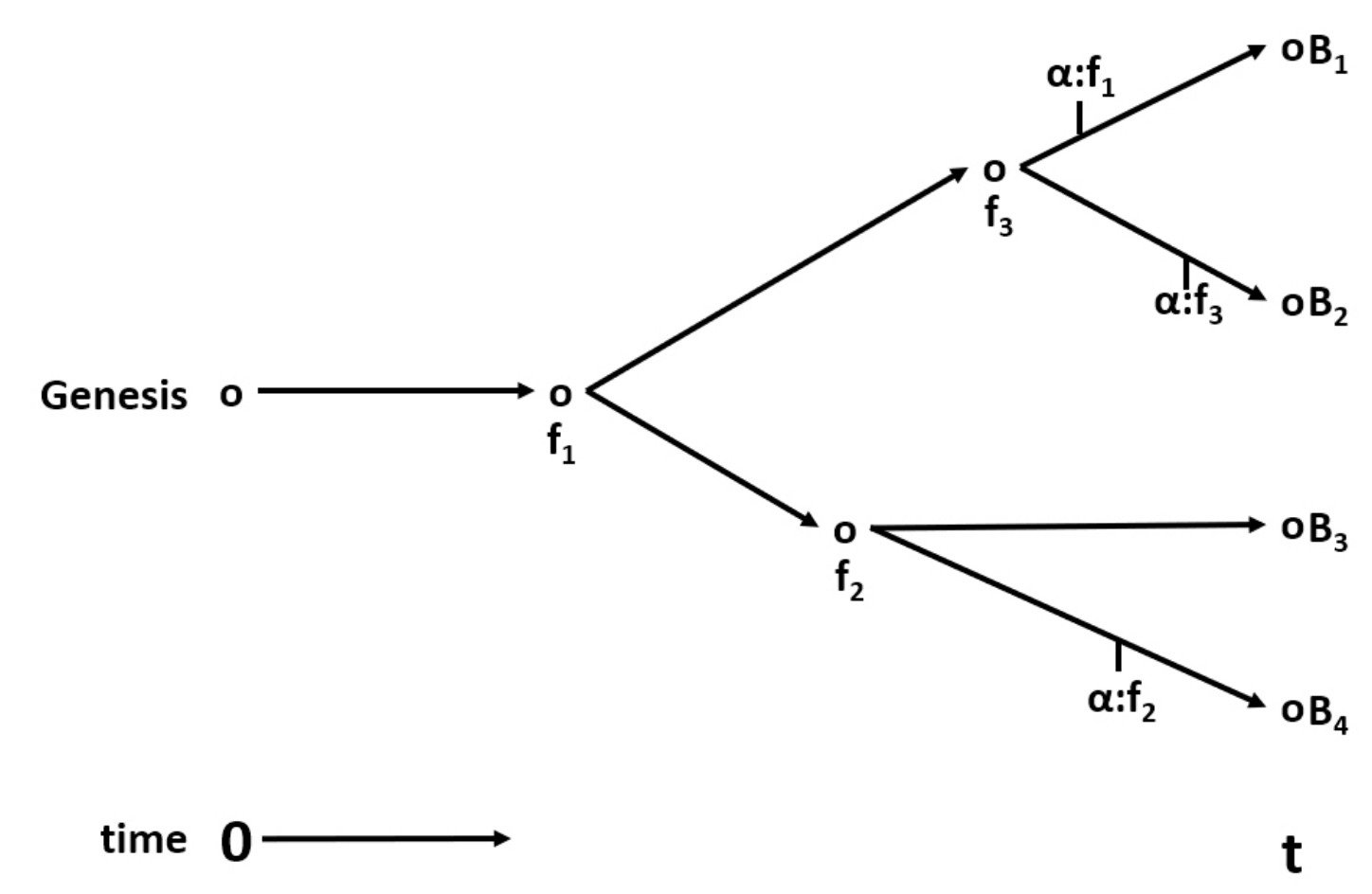}
\end{center}
\caption{A Blockchain Network}
\label{fig: A Blockchain Network}
\end{figure}

\subsection{Generalized penalty assignment rule and penalty function}

We restate the penalty assignment rule and penalty function for the general case.\vskip5pt

\textbf{Generalized Penalty Assignment Rule}

(i) Apply the $ADESS$ Penalty Assignment Rule to each fork-block $f_{n}$ (i.e. a chain that is not first to broadcast $\alpha$ post-fork blocks is assigned a penalty relative to the chain that is first to broadcast $\alpha$ post-fork blocks ) with the exception that\vskip5pt

(ii) If the chain that is first to broadcast $\alpha$ blocks after a fork-block $f_{n}$ is subject to an active penalty at the time of the broadcast, then no penalty assignment is made at fork-block $f_{n}$.\qed
\vskip5pt

\textbf{Generalized Penalty Function}

(i) Apply the $ADESS $ Penalty Function to each penalized chain at fork-block $f_{n}$.\vskip5pt

(ii) At the time a chain $B_{m}$ has overcome every penalty assigned to it, it has no active penalties and the protocol for $B_{m}$ reverts to the Nakamoto criteria of comparing cumulative mining puzzle difficulty with other chains that do not face active penalties. When it has overcome its last penalty, the score of $B_{m}$ is re-set to equal the cumulative mining puzzle difficulty of the baseline chain for the last penalty, plus a small additional amount $\epsilon$. For example, if $<B_{m}:f_{n}:B_{m'}:t>$ is the last penalty to become inactive, $B_{m}$ is assigned an adjusted cumulative mining puzzle difficulty = ($B_{m'}$ cumulative puzzle difficulty at $t$) + $\epsilon$. If $B_{m}$ has more than one active penalties and all are overcome at the same time, the baseline chain with the highest score is used for the re-set.\qed

\subsection{The canonical chain}

At any time the set of chains can be partitioned into two groups. One group are chains that have been assigned at least one penalty that has not been overcome. These chains are not eligible to be canonical. Among chains in the other group, those ranked by (possibly adjusted) cumulative mining puzzle difficulty, one will be canonical - provided there is at least one chain in this group. Proposition 2 establishes that there is at least one chain that is eligible to be canonical. 

\begin{proposition}
\label{prop: un-penalized chain}
Under the generalized $ADESS$ Protocol, there is at least one chain to which no penalty has ever been applied at any time.
\end{proposition}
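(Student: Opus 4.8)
The plan is to argue by contradiction, tracing the chain of penalty assignments backward in time. Suppose for contradiction that every chain $B_m$ in the network has had at least one penalty assigned to it at some point. The key observation is that by the Generalized Penalty Assignment Rule, a penalty is only ever assigned to a chain $B_m$ at a fork-block $f_n$ relative to some \emph{baseline} chain $B_{m'}$ — namely, the chain that was first to broadcast $\alpha$ post-fork blocks from $f_n$ and that was \emph{not} itself subject to an active penalty at the moment of that broadcast (condition (ii) of the rule). So each penalized chain points to an unpenalized-at-that-moment baseline chain, which either was never penalized (contradicting our assumption immediately) or was first penalized strictly later, or — the case requiring care — had some penalty that was already overcome.

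First I would set up the bookkeeping: for each chain $B_m$, let $\tau(m)$ be the earliest time at which $B_m$ was ever assigned a penalty (finite, by the contradiction hypothesis). Consider the chain $B_{m^\*}$ minimizing $\tau(m)$, with the first penalty assigned at fork $f_n$ relative to baseline $B_{m'}$ at time $t = \tau(m^\*)$. By the assignment rule, at time $t$ the baseline chain $B_{m'}$ had just broadcast its $\alpha$-th post-fork block from $f_n$ and carried \emph{no active penalty} at that instant. Now I would split into cases on whether $B_{m'}$ had ever been assigned a penalty at all. If not, we contradict the assumption directly. If so, its first penalty was assigned at time $\tau(m') \geq \tau(m^\*) = t$ by minimality of $\tau(m^\*)$; but it had no active penalty at time $t$, so either $\tau(m') > t$ strictly (meaning at time $t$ it was genuinely pristine, and the distinction between "pristine" and "assigned-but-overcome" must be pushed on), or $\tau(m') = t$ and the penalty on $B_{m'}$ was assigned-and-overcome all before time $t$, forcing $\tau(m') < t$, a contradiction.

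The main obstacle is handling the case where the baseline chain $B_{m'}$ has an \emph{overcome} (inactive) penalty at the relevant time — being "baseline" requires only no \emph{active} penalty, not a spotless history. To close this I would observe that for a penalty on $B_{m'}$ to be overcome by time $t$, it must first have been \emph{assigned} at some time strictly before $t$, i.e. $\tau(m') < t = \tau(m^\*)$, contradicting the minimality of $\tau(m^\*)$. Thus no such overcome penalty can exist for $B_{m'}$, so $B_{m'}$ was in fact never penalized at or before time $t$; combined with minimality this forces $\tau(m') > t$ or $B_{m'}$ is never penalized — and in the strict-inequality case I would iterate the argument with $B_{m'}$ in place of $B_{m^\*}$, but since $\tau(m')$ cannot be strictly less than $\tau(m^\*)$ and the first penalty on $B_{m^\*}$ was defined to be earliest, the only consistent possibility is that some chain is never penalized, which is the desired contradiction. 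I would conclude by noting finitely many forks and chains exist at any time $t$ (as stipulated in the tree-graph setup), so the minimum $\tau(m)$ is well-defined and the descent argument terminates.
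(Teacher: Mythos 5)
Your backward, minimal-counterexample argument has a genuine gap at exactly the point you flag. Writing $t=\tau(m^{\ast})$ for the globally earliest penalty-assignment time, you correctly conclude that the baseline $B_{m'}$ of that first penalty can have no overcome penalty at time $t$ (overcoming takes time, so such a penalty would have had to be assigned before $t$), hence $B_{m'}$ is pristine at time $t$. But ``pristine at time $t$'' is entirely consistent with $\tau(m')>t$, i.e.\ with $B_{m'}$ being penalized later, and that in turn is consistent with both your contradiction hypothesis and the minimality of $\tau(m^{\ast})$. The proposed iteration does not close this: when you replace $B_{m^{\ast}}$ by $B_{m'}$ and examine its first penalty at time $t'>t$, the new baseline $B_{m''}$ need only be free of \emph{active} penalties at $t'$; it may carry a penalty assigned and already overcome at some intermediate time in $(t,t')$, which contradicts nothing. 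So the descent stalls rather than terminating in a contradiction, and your closing sentence (``the only consistent possibility is that some chain is never penalized'') does not follow. More fundamentally, your argument uses only the property ``a baseline has no active penalty at the moment of assignment,'' and that property alone is too weak to yield the proposition; what makes the statement true is the tree structure of the fork relation.

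The paper's proof exploits that structure directly with a forward construction: at every fork-block $f_{n}$, at least one descendant branch escapes penalization at $f_{n}$ (the branch of the chain first to broadcast $\alpha$ post-fork blocks, or every branch in the case where that chain is under an active penalty and no assignment is made at $f_{n}$ at all). Walking from the Genesis block toward a head while always selecting an unpenalized branch at each fork yields a chain $B_{m}$ that is penalized at no fork on its path, and a chain can only be penalized at fork-blocks on its own path. If you wish to retain a contradiction format, the invariant to propagate down the tree is ``some descendant of this block reaches a head without ever being penalized,'' not a statement about the global ordering of penalty-assignment times.
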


\begin{proof}
We prove the proposition by construction. Start at the Genesis block and proceed to the first fork-block $f_{1}$. There will be at least one post-fork chain that is not penalized. Choose one of the non-penalized chains and proceed to the next fork-block. There will be at least one post-fork chain that is not penalized. Choose one of the non-penalized chains and proceed to the next fork-block, and so forth until a head $B_{m}$ is reached. The chain $B_{m}$ is not penalized at any fork.
\end{proof}

The example in the proof is displayed in Figure \ref{fig: A Blockchain Network} when comparing chains $B_{1}$ and $B_{2}$. If $B_{1}$ is the un-penalized chain at $f_{1}$, then either $B_{1}$ or $B_{2}$ must be un-penalized.  

\begin{corollary**}
There is exactly one canonical chain under generalized $ADESS$.
\end{corollary**}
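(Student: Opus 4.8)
The plan is to establish Corollary 2 as a direct consequence of Proposition 2 together with the scoring structure laid out in the Generalized Penalty Function. I would first argue existence: by Proposition 2 there is at least one chain $B_m$ to which no penalty has ever been applied, so the group of penalty-free (equivalently, currently-eligible) chains is nonempty. Any such chain carries a well-defined cumulative mining puzzle difficulty score (possibly adjusted by the $+\epsilon$ re-set at the moment its last penalty became inactive, but for a never-penalized chain simply the raw Nakamoto score), so the set of eligible scores is a nonempty finite set of real numbers and therefore attains a maximum. Declaring the maximizer canonical is exactly the Nakamoto rule restricted to the eligible group, so a canonical chain exists.

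Next I would address uniqueness. There are two things to rule out. First, an eligible chain and an \emph{ineligible} chain cannot tie for canonical status, because the Generalized Penalty Function explicitly removes from contention any chain carrying an active penalty: such a chain is scored by discounted block count against its baseline, never wins that comparison while penalized, and in any case is not compared by cumulative difficulty to the eligible group at all. Second, among eligible chains one must argue that ties in cumulative difficulty do not occur, or are broken. This mirrors the situation already present in Nakamoto, where generic puzzle-difficulty values make exact ties a measure-zero event; moreover the $\epsilon$ re-set at each penalty-removal is precisely the tie-breaking device the Penalty Function introduces, ensuring a chain that has just overcome its last penalty strictly exceeds its former baseline. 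So I would invoke the same genericity/tie-break convention used throughout the paper to conclude the maximizer is unique.

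I would then close by noting the two halves together give "exactly one": Proposition 2 supplies the nonempty eligible set (existence of a maximum), and the scoring rules plus the standard no-ties convention supply uniqueness of the maximizer, while the partition of all chains into the eligible and ineligible groups guarantees no ineligible chain can compete. Hence at every time $t$ there is exactly one canonical chain.

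The main obstacle I anticipate is the uniqueness half, specifically the handling of ties among eligible chains of equal cumulative difficulty. The paper leans on the informal claim that such exact coincidences are non-generic and that the $+\epsilon$ re-sets separate chains that have recently shed penalties, but it never states this as a formal lemma; a fully rigorous treatment would need to make that tie-breaking convention precise (for instance, a deterministic ordering on chains, or an argument that the $\epsilon$ re-sets propagate so that no two eligible chains share a score). Since the excerpt permits me to assume everything stated earlier, I would invoke this convention rather than prove it, flagging that the existence half is the substantive content delivered by Proposition 2 and the uniqueness half is essentially bookkeeping under the paper's standing assumptions.
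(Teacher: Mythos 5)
Your proposal follows essentially the same route as the paper's own proof: existence of an eligible chain from Proposition 2, then uniqueness by applying the Nakamoto highest-cumulative-difficulty rule among the chains without active penalties. If anything you are more careful than the paper, which simply asserts that the chain with the most work among the unpenalized group is canonical and never addresses the tie-breaking issue you flag.
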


\begin{proof}
Proposition 2 states that there is at least one chain to which no penalty has ever been applied at any time. Such a chain is eligible to be canonical. Suppose there is more than one chain without active penalties at a point in time. These chains are compared on the basis of cumulative mining puzzle difficulty. Under Nakamoto the chain with the most work is canonical.
\end{proof}

Finally, $ADESS$ applies to a circumstance where there is only one fork-block with actively mined descendant chains and none of those chains have forks. $ADESS$ is the application of generalized $ADESS$ in the case where there is one fork-block, two fork chains and neither chain has an active penalty from a prior fork. In that case, chain $\mathcal{A}$ is penalized relative to chain $\mathcal{IC}$.

\section{Relaxing the Restriction on the Growth Rate of Chain $\mathcal{A}$}
\label{app:Non-Constant Growth Rate}

The model limits the attacker to choosing a constant growth rate $\gamma$ for chain $\mathcal{A}$. We now relax that restriction and allow the attacker to choose the growth rate of each block $n$ on chain $\mathcal{A}$ as the function $\gamma(n,\xi)$. Equation \ref{eq: marginal condition of an attack} becomes 

\begin{multline}\tag{10'}
M(\xi) = \underset{\text{additional block reward}}{\underbrace{p_{B}}} - c\underset{\text{cost increase on current blocks}}{\underbrace{\sum_{n = 0}^{\lceil N(1 + \xi) -1\rceil}d/d\xi(\delta^{n/(1+\xi)}(1+\gamma(\xi,n))^{n})}} \\- \underset{\text{cost of additional block}}{\underbrace{c}}
\end{multline}

We do not evaluate all possible functional forms of $\gamma(n,\xi)$. We show that Theorem 1 and Corollary 1 continue to hold if the growth rate is a affine function of $\xi$. Let $\gamma(n,\xi) =  \rho + \xi f(n)$ for some scalar $\rho > 0$ and function $f(n) > 0$. The derivative for the discounted cost of an attack at $n$th block on chain $\mathcal{A}$ is

\begin{multline}{\nonumber}
d/d\xi(\delta^{n/(1+\gamma(\xi,n))}(1+ \gamma(\xi,n))^{n}) =\\ 
nf(n)(\xi f(n)+\rho)^{n-1}\delta^{n/(\xi f(n) + \rho +1)} \\
- nf(n)log(\delta)[(\xi f(n) + \rho)^{n} + 1]\delta^{n/(\xi f(n) + \rho + 1)}[\xi f(n) + \rho +1]^{-2}
\end{multline}

The expression is positive, noting that $log(\delta) < 0$, since $\delta \in (0,1)$. Therefore Equation 10' is negative. Noting that $n$, $\delta$ and $\xi$ are strictly positive, the expression is bounded away from zero, which implies that there is no upper bound to the expression. It follows that there is a value $\underline{\xi}$ for which $\pi(\underline{\xi}) < 0$, which proves Theorem 1 and Corollary 1.

\end{appendix}


\begin{thebibliography}{20}

\bibitem{Budish (2018)}
Budish, Eric (2018).
\textit{The Economic Limits of Bitcoin and the Blockchain}.
National Bureau of Economic Research Working Paper 24717, June 2018.
\url{https://www.nber.org/papers/w24717}

\bibitem{Buterin ESS}
Buterin, Vitalik (2014).
\textit{Proof of Stake: How I Learned to Love Weak Subjectivity}.
Ethereum Foundation Blog, November 25th, 2014.
\url{https://blog.ethereum.org/2014/11/25/proof-stake-learned-love-weak-subjectivity/}


\bibitem{Eyal and Sirer (2014)}
Eyal, Ittaly and Emin Gun Sirer (2018).
\textit{Majority is not enough: Bitcoin Mining is Vulnerable}.
Christin, N., Safavi-Naini, R. (eds) Financial Cryptography and Data Security. FC 2014. Lecture Notes in Computer Science, vol 8437. Springer, Berlin, Heidelberg. 
\url{https://doi.org/10.1007/978-3-662-45472-5_28}


\bibitem{Garay Bitcoin Backbone (2017)}
Garay, J., Kiayias, A., Leonardos, N. (2017).
\textit{The Bitcoin Backbone Protocol with Chains of Variable Difficulty}.In: Katz, J., Shacham, H. (eds) Advances in Cryptology – CRYPTO 2017. CRYPTO 2017. Lecture Notes in Computer Science(), vol 10401. Springer, Cham. \url{https://doi.org/10.1007/978-3-319-63688-7_10}.

\bibitem{Gervais et.al.}
Gervais, Arthur, Vasileios Glykantzis, Ghassan O. Karame, Hubert Ritzdorf, Karl Wurst and Srdjan Capkun (2016).
\textit{On the Security and Performance of Proof of Work Blockchains}
CCS '16: Proceedings of the 2016 ACM SIGSAC Conference on Computer and Communications Security pp. 3-16.
Association for Computing Machinery, New York, NY. United States

\bibitem{Guo and Ren}
Guo, Dongning and Ling Ren (2022).
\textit{Bitcoin's Latency - Security Analysis Made Simple}
arxiv.2203.06357v3
\url{https://arxiv.org/abs/2203.06357}

\bibitem{Leshno and Strack}
Leshno, Jacob and Philipp Strack (2020).
\textit{Bitcoin: An Axiomatic Approach and an Impossibility Theorem}.
American Economic Review: Insights 2020, 2(3): 269-286.
American Economic Association


\bibitem{Lovejoy (2020)}
Lovejoy, James (2020).
\textit{Reorgs on Bitcoin Gold: Counterattacks in the wild}.
Medium \url{https://medium.com/mit-media-lab-digital-currency-initiative/reorgs-on-bitcoin-gold-counterattacks-in-the-wild-da7e2b797c21}

\bibitem{MIT DCI}
MIT Media Lab Digital Currency Initiative reorg tracker (2020).
\textit{51\% attacks - reorg tracker}.
\url{https://dci.mit.edu/51-attacks}

\bibitem{Moroz et.al. (2020)}
Moroz, Daniel; Daniel Aronoff, Neha Narula and David Parkes (2020).
\textit{Double Spend Counterattacks}.
arXiv:2002.10736[cs.CR]
\url{https://doi.org/10.48550/arXiv.2002.10736}

\bibitem{Bitcoin White Paper (2008)}
Nakamoto, Satoshi (2008).
\textit{Bitcoin; A Peer - to - Peer Electronic Cash System}.
\url{https://bitcoin.org/bitcoin.pdf}.Original Bitcoin code \url{https://satoshi.nakamotoinstitute.org/code/}, 2008\\
Original Bitcoin code \url{https://satoshi.nakamotoinstitute.org/code/}

\bibitem{Schelling (1960)}
Schelling, Thomas C. (1960).
\textit{The Strategy of Conflict}.
Cambridge, Harvard University Press.

\bibitem{Singer (2020)}
Singer, Andrew (2020).
\textit{Fight fire with fire: MIT scholar suggests ETC counters 51\% attacks}. Cointelegraph Set. 15, 2020. \url{https://cointelegraph.com/news/fight-fire-with-fire-mit-scholar-suggests-etc-counters-51-attacks} 

\bibitem{Eth Yellow Paper (2021)}
Wood, Gavin (2021).
\textit{Ethereum: A secure Decentralised Generalised Transaction Ledger - Berlin Version d77a387 - 2022-4-26}.
\url{https://ethereum.github.io/yellowpaper/paper.pdf}

\end{thebibliography}
\end{document}